\newcommand{\bra}[1]{\langle#1|}
\newcommand{\ket}[1]{|#1\rangle}
\newcommand{\kb}[1]{\ket{#1} \bra{#1}}
\newcommand{\ip}[2]{\langle #1, #2 \rangle} 
\newtheorem{theorem}{Theorem}
\newtheorem{lemma}[theorem]{Lemma}
\theoremstyle{definition}
\newtheorem{defn}[theorem]{Definition}
\newtheorem{protocol}[theorem]{Protocol}
\newcommand{\SFE}{\mathrm{SFE}} 
\newcommand{\ASFE}{\mathrm{A}_{\SFE}} 
\newcommand{\BSFE}{\mathrm{B}_{\SFE}} 
\newcommand{\BSFEP}{\mathrm{B}_{\SFE}'} 
\newcommand{\ASFER}{\mathrm{A}_{\mathrm{rand}}}
\newcommand{\BSFER}{\mathrm{B}_{\mathrm{rand}}'}   
\newcommand{\OT}{\mathrm{OT}} 
\newcommand{\AOT}{\mathrm{A}_{\OT}} 
\newcommand{\BOT}{\mathrm{B}_{\OT}}  
\newcommand{\XOT}{\mathrm{XOT}} 
\newcommand{\IP}{\mathrm{IP}} 
\newcommand{\MP}{\mathrm{\Smiley \$}}  
\newcommand{\GS}{\mathrm{\Sey \$}}  
\newcommand{\AXOT}{\mathrm{A}_{\XOT}} 
\newcommand{\BXOT}{\mathrm{B}_{\XOT}}  
\newcommand{\DR}{\mathrm{DR,n}}
\newcommand{\ADR}{\mathrm{A}_{\DR}}
\newcommand{\BDR}{\mathrm{B}_{\DR}}
\newcommand{\KNOT}{\mathrm{knOT}} 
\newcommand{\AKNOT}{\mathrm{A}_{\KNOT}} 
\newcommand{\BKNOT}{\mathrm{B}_{\KNOT}} 
\newcommand{\EQ}{\mathrm{EQ}} 
\newcommand{\AEQ}{\mathrm{A}_{\EQ}} 
\newcommand{\BEQ}{\mathrm{B}_{\EQ}} 
\newcommand{\AIP}{\mathrm{A}_{\IP}} 
\newcommand{\BIP}{\mathrm{B}_{\IP}} 
\newcommand{\AMP}{\mathrm{A}_{\MP}} 
\newcommand{\BMP}{\mathrm{B}_{\MP}} 
\newcommand{\AGS}{\mathrm{A}_{\GS}} 
\newcommand{\BGS}{\mathrm{B}_{\GS}}
\pgfplotsset{compat=newest}
\pgfplotsset{cycle list/Set2}
\pgfplotsset{IneqStyleGtr/.style = {%
   decoration={border,segment length=1mm, amplitude=3mm,angle=90},
              postaction={decorate,draw},
              draw opacity=0.5
  }
}
\begin{document}

\title{
A constant lower bound for any quantum protocol for secure function evaluation}

\author{
  Sarah Osborn$^*$ \qquad Jamie Sikora$^\dagger$  \\[4mm]
  {\small\it
  \begin{tabular}{c}
   {\large$^*$}Virginia Polytechnic Institute and State University - \tt{sosborn@vt.edu} \\[1mm] 
   {\large$^\dagger$}Virginia Polytechnic Institute and State University - \tt{sikora@vt.edu} \\[1mm]
  \end{tabular}
  }
}

\date{March 16, 2022} 

\maketitle 
 
\begin{abstract}  
Secure function evaluation is a two-party cryptographic primitive where Bob computes a function of Alice's and his respective inputs, and both hope to keep their inputs private from the other party.
It has been proven that perfect (or near perfect) security is impossible, even for quantum protocols. 
We generalize this no-go result by exhibiting a constant lower bound on the cheating probabilities for any quantum protocol for secure function evaluation, and present many applications from oblivious transfer to the millionaire's problem. Constant lower bounds are of practical interest since they imply the impossibility to arbitrarily amplify the security of quantum protocols by any means. 
\end{abstract} 


\section{Introduction}
The first paper studying quantum cryptography was written by Stephen Wiesner in the 1970s (published in 1983)~\cite{W83}. 
In that paper, he presented a (knowingly insecure) protocol for \emph{multiplexing} where a receiver could choose to learn one of two bits of their choosing. 
Since then, this task has been referred to as \emph{$1$-out-of-$2$} oblivious transfer, and has been extensively studied in the quantum community~\cite{DFSS05, C07, WST08, STW09, L97, BCS12, CKS13, CGS16, GRS18, KST20}.   
Indeed, since the development of quantum key distribution in 1984~\cite{BB84}, it has been of great interest to use quantum mechanics to develop protocols for classical tasks and push the limits of quantum theory to find optimal protocols (and their limitations).  

On the other hand, it was shown in the late 1990s (and a few times since) that perfect security for a number of cryptographic tasks, including secure function evaluation, could not attain perfect, or even near perfect, security~\cite{M97,LC97,LC98,L97,BCS12}. 
Indeed, some popular two-party cryptographic protocols, including bit commitment~\cite{CK11}, strong coin flipping~\cite{K02}, oblivious transfer~\cite{CKS13}, strong die rolling~\cite{AS10}, as well as many others, have all seen constant lower bounds presented. 
Constant lower bounds are of great interest for several reasons, of which we note a few. 
The first reason, a practical one, is that they imply that there is no way to arbitrarily amplify the security by any means (such as repeating the protocol many times and combining them in some way). 
The second reason, a theoretical one, now opens the question as to what are the optimal security parameters. 
Assuming quantum mechanics offers \emph{some} advantage over their classical counterparts, the question now becomes to what extent is this advantage.

Note that two-party cryptography has some strange behavior, making its study very intriguing. For example, in the case of die rolling (where Alice and Bob wish to roll a die over the (possibly quantum) telephone) there can sometimes be classical protocols that offer decent security~\cite{S17}. 
On the other hand, having classical protocols for something like coin flipping, bit commitment, and oblivious transfer is impossible~\cite{K02}. 
And while quantum mechanics seems to deny us strong coin flipping (we have a constant lower bound~\cite{K02}), it does give us arbitrarily good security for weak coin flipping~\cite{M07, ACGKM16}.
Therefore, classifying the behavior of two-party cryptographic primitives is a fruitful, and sometimes surprising, endeavor. 
To this end, we study the broad class of two-party cryptography known as secure function evaluation which we now discuss.  


\subsection{Secure function evaluation} 
\label{intro_SFE}
  
\emph{Secure function evaluation} (SFE) is a two-party cryptographic primitive in which Alice begins with an input $x \in X$ and Bob begins with an input $y \in Y$ (each input is chosen uniformly at random\footnote{We believe our analysis works for other probability distributions over the inputs as well, as long as they are uncorrelated. 
The assumption of uniformity makes certain expressions cleaner, such as the probability of Alice being able to blindly guess Bob's input.}) and Bob has a deterministic function $f : X \times Y \to B$. 
Here, we take $X$, $Y$, and $B$ to have finite cardinality. See Figure~\ref{fig_sfe} below. 

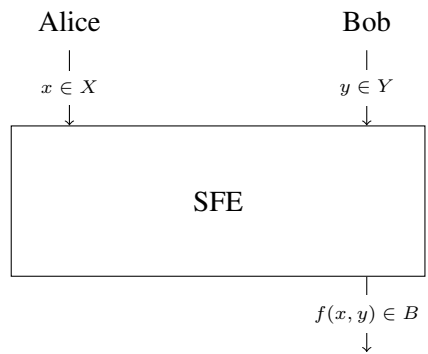
\begin{figure}[h]
\centering
\begin{tikzpicture}[node distance=1.11cm]
\node[draw,minimum width=5.5cm, minimum height=2cm] (SFE) {SFE};
\node (Alice) [above=of SFE.153] {Alice};
\node (Bob) [above=of SFE.27] {Bob};
\draw[->] ([yshift=1cm]SFE.153) --  node[fill=white] {\ssmall $x \in X$} +(0pt,-1cm);
\draw[->] ([yshift=1cm]SFE.27) -- node[fill=white] {\ssmall $y \in Y$} +(0pt,-1cm);
\draw[->] (SFE.333) -- node[fill=white] {\ssmall $f(x,y) \in B$} +(0pt,-1cm); 
\end{tikzpicture}
\caption{A pictorial representation of SFE. Bob wants to compute his function $f$ while he and Alice keep their inputs private.} 
\label{fig_sfe}
\end{figure}

The goals when designing a (quantum) protocol for SFE are: 
\begin{enumerate}
  \item \emph{Completeness:} If both parties are honest then Bob learns $f$, evaluated on their inputs $x$ and $y$.
  \item \emph{Soundness against cheating Bob:} Cheating Bob obtains no extra information about honest Alice's input $x$ other than what is logically implied from knowing $f(x,y)$.
  \item \emph{Soundness against cheating Alice:} Cheating Alice obtains no information about honest Bob's input $y$.
\end{enumerate} 

It is natural to assume perfect completeness of a protocol and then to quantify the extent to which they can be made sound. 
In other words, we consider protocols for SFE that do what they are meant to do when Alice and Bob follow them (that is, they compute $f$), and then we try to find the ones which hide their respective inputs the best. 

To quantify soundness against cheating Bob, for each such protocol we define the following symbols. 
\begin{center}
\begin{tabularx}{\textwidth}{rX}
  $\BSFE$: & The maximum probability with which cheating Bob can guess honest Alice's input $x$. \\ 
  $\BSFEP$: & The maximum probability with which cheating Bob can guess every $f(x,y)$,  \emph{for each} $y \in Y$. \\ 
\end{tabularx}
\end{center}

Note that often these two \emph{cheating probabilities} are the same. 
For instance, in $1$-out-of-$2$ oblivious transfer we have $x \in X = \{ 0, 1 \}^2$ as a $2$-bit string, $y \in Y = \{ 1, 2 \}$ as an index, and $f(x, y) = x_y$, i.e., the $y$-th bit of $x$. 
Then clearly $\BSFE = \BSFEP$, since knowing each bit is equivalent to knowing the full string.  
In general, $\BSFE \leq \BSFEP$, since if Bob is able to correctly learn Alice's input $x$, then he can compute any function of it he wants. 

Similarly, to quantify soundness against cheating Alice, we define the following symbols.  
\begin{center}
\begin{tabularx}{\textwidth}{rX} 
  $\ASFE$: & The maximum probability with which cheating Alice can guess honest Bob's input $y$. \\ 
\end{tabularx}
\end{center} 

Note that there is only the one definition for a cheating probability for Alice since she has no output. 
     
    
\subsection{Main result} \label{intro_Main}  
  
We now present our main result, a trade-off curve relating Alice and Bob's cheating probabilities that must be satisfied for any quantum protocol for SFE.  

\begin{theorem} \label{thm:LowerBound}
In any quantum protocol for secure function evaluation, it holds that 
\begin{equation} \label{lb}
\BSFEP \geq \frac{1}{\left|Y\right|\ASFE}-2\left(\left|Y\right|-1\right)\sqrt{1-\frac{1}{\left|Y\right|\ASFE}} 
\end{equation} 
where $Y$ is the set of choices for Bob's input. 
\end{theorem}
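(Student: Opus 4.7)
The plan is to construct an explicit cheating strategy for Bob whose success probability matches the claimed lower bound, following the standard Uhlmann-theorem template used in quantum cryptographic impossibility results. I first purify the honest protocol: for each $y\in Y$, let $\ket{\psi_y}_{AB}$ denote the joint pure state at the end of an honest execution with Bob's input $y$, where Alice's uniformly random input $x$ has been coherently absorbed into her register $A$. Let $\rho_y^A$ denote Alice's reduced state. By perfect completeness, Bob's honest measurement on $B$ recovers $f(x,y)$ deterministically, and can therefore be realized coherently as a unitary $U_y$ on $B$ together with a fresh ancilla $C_y$ into which $f(x,y)$ is written.

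The first substantive step is to translate the assumed bound $\ASFE$ into a fidelity bound on the $\rho_y^A$'s. Since Alice is always free to play honestly and then optimally measure her register to guess $y$, $\ASFE$ upper-bounds the optimal probability of discriminating the equiprobable ensemble $\{\rho_y^A\}_y$. A pairwise Helstrom argument alone would yield only $F(\rho_y^A,\rho_{y'}^A)\geq 2-|Y|\ASFE$, which is too weak; the exact shape of the theorem suggests one should instead obtain an inequality like $F(\rho_y^A,\bar{\rho}^A)^2 \geq 1/(|Y|\ASFE)$, where $\bar{\rho}^A = \frac{1}{|Y|}\sum_y \rho_y^A$ is the average state. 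Such a bound can plausibly be derived from an SDP dual or pretty-good-measurement argument that exploits the uniform prior globally rather than pairwise.

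Given such a fidelity bound, Uhlmann's theorem then provides, for each $y$, a unitary $V_y$ acting only on Bob's register $B$ such that $(I_A\otimes V_y)\ket{\psi_y}$ has overlap at least $F(\rho_y^A,\bar{\rho}^A)$ with a fixed purification $\ket{\bar{\psi}}$ of $\bar{\rho}^A$. Cheating Bob's strategy is then: play honestly with some fixed input $y_0$; apply $V_{y_0}$ on $B$ to rotate the joint state close to the hub state $\ket{\bar{\psi}}$; and, for each $y\in Y$ in sequence, apply $V_y^\dagger$ on $B$ to approximate $\ket{\psi_y}$, coherently extract $f(x,y)$ into a fresh ancilla $C_y$ via $U_y$, and apply $V_y$ to return close to $\ket{\bar{\psi}}$. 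At the end, measure all ancillas $C_y$ to output the $|Y|$ guesses.

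The success probability is then analysed by the triangle inequality for trace distance: each of the $|Y|-1$ nontrivial round trips through the hub introduces deviation of order $\sqrt{1-1/(|Y|\ASFE)}$ from the ideal trajectory, so the accumulated trace-distance error is at most $2(|Y|-1)\sqrt{1-1/(|Y|\ASFE)}$, and subtracting this from an ideal success probability of $1/(|Y|\ASFE)$ (arising from the squared fidelity at the hub) produces the claimed trade-off. The main obstacle is establishing the precise fidelity inequality in the second step: it controls both the leading term $1/(|Y|\ASFE)$ and the argument of the square root in the final estimate, and will require a non-pairwise argument that exploits the uniform prior over $y$ in a global, ensemble-wide way.
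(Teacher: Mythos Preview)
Your approach follows the Uhlmann-purification template, which the paper explicitly distinguishes itself from. The paper instead reduces SFE to $|Y|$-sided die rolling: Alice and Bob run the SFE protocol, Alice sends a uniformly random $b\in Y$, Bob reveals $y$ together with $f(x,y)$, and Alice checks. In this die-rolling protocol Alice's cheating probability on outcome $0$ is \emph{exactly} $\ASFE$, while Bob's cheating probability $\mathrm{B}_{\mathrm{DR,0}}$ equals his \emph{average} (over uniform $y$) success probability of outputting $f(x,y)$ from whatever encoding of $x$ he holds at the end of the SFE sub-protocol. Kitaev's die-rolling bound then gives $\mathrm{B}_{\mathrm{DR,0}}\geq 1/(|Y|\,\ASFE)$ directly, with no fidelity estimate and no purification. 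A sequential gentle-measurement lemma (proved by H\"older plus induction on the POVMs, not Uhlmann) converts this average success $p$ into simultaneous success $\geq p-2(|Y|-1)\sqrt{1-p}$, and monotonicity of this expression in $p$ finishes the proof.

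The step you flag as ``the main obstacle'' is in fact false as stated, so your outline does not close. The inequality $F(\rho_y^A,\bar\rho^A)^2\geq 1/(|Y|\,\ASFE)$ fails for general ensembles: with $|Y|=n\geq 3$, $\rho_1=\kb{0}$ and $\rho_2=\cdots=\rho_n=\kb{1}$, one has $\bar\rho=\tfrac{1}{n}\kb{0}+\tfrac{n-1}{n}\kb{1}$, so $F(\rho_1,\bar\rho)^2=1/n$, whereas the optimal discrimination probability is $2/n$ and hence $1/(|Y|\,\ASFE)\geq 1/2$. Such an ensemble is realizable in an SFE protocol (Bob can simply leak the bit ``$y=1$?'' to Alice), and nothing in your argument uses further SFE structure, so the bound cannot be obtained this way. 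Without a uniform per-$y$ fidelity, your round-trip error accounting also collapses: the trip through the bad $y$ costs order $\sqrt{1-1/n}$, not $\sqrt{1-1/(|Y|\,\ASFE)}$. Finally, the bookkeeping that makes $1/(|Y|\,\ASFE)$ the ``ideal success probability'' is unclear, since if all fidelities were $1$ the ideal success would be $1$; in the paper's proof that number enters not as a squared fidelity but as the Kitaev lower bound on an average guessing probability, which is a genuinely different mechanism.
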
  

We now discuss this bound. 
Note that 
\begin{equation}
\ASFE \geq \frac{1}{|Y|}, 
\end{equation} 
since she can always blindly, or randomly, guess the value of $y \in Y$. 
Since Alice has no output function (like Bob does) she may not be able to infer anything about $y$ from the protocol if she is honest. 
Therefore, sometimes her best strategy is to randomly guess, and in this case we would have 
\begin{equation}
\ASFE = \frac{1}{|Y|}, 
\end{equation} 
which translates to perfect security against a cheating Alice. 
However, in that case, our bound implies that 
\begin{equation}
\BSFEP = 1, 
\end{equation} 
meaning Bob can compute his function perfectly for \emph{every choice of input on his side}, i.e., complete insecurity against a cheating Bob.  
This implication exactly recovers Lo's conclusion in his 1997 paper~\cite{L97}, and also the conclusion in a more recent paper by Buhrman, Christandl, and Schaffner~\cite{BCS12}. 
It should be mentioned that the above two papers also discuss the ``Alice can cheat with a small probability" case as well. 
A key component in their proofs is the application of Uhlmann's theorem on purifications of the protocol to find unitaries with which Bob can use to cheat. 
As evidenced later on, this is very different from our proof. 
In fact, at no point in our protocol do we assume anything is pure and we only deal with POVMs, not unitaries. 
The ``magic ingredient'' in our proof is a generalization of Kitaev's lower bound for strong coin flipping~\cite{K02}. 
Moreover, we chose to quantify the security solely in terms of Alice and Bob's cheating probabilities, which is complementary to the results in~\cite{BCS12}. 
  
Before continuing, we now discuss what we mean by having a ``constant lower bound.'' 
To this end, we define the following symbols. 
\begin{center}
\begin{tabularx}{\textwidth}{rX}
  $\ASFER$: & The maximum probability with which cheating Alice can guess honest Bob's input $y$ \emph{given only black-box access to the SFE task}. \\ 
  $\BSFER$: & The maximum probability with which cheating Bob can learn every $f(x,y)$,  \emph{for each} $y \in Y$, \emph{given only black-box access to the SFE task}. \\ 
\end{tabularx} 
\end{center}  

In other words, the cheating definitions above correspond to the information Alice and Bob can infer only from their outputs. 
Of course, Alice has no output, so clearly 
\begin{equation}
\ASFER = \frac{1}{|Y|}.  
\end{equation}  
However, as is illustrated in our examples, it is less clear how to write $\BSFER$ in terms of the parameters of a general SFE protocol. 

Equipped with these symbols, we are now ready to state our constant lower bound on SFE. 

\begin{theorem} 
In any quantum protocol for secure function evaluation, either $\BSFER = 1$ (in which case the protocol is \emph{completely} insecure), or there exists a constant $c > 1$ such that 
\begin{equation} 
\ASFE \geq c \cdot \ASFER 
\qquad \text{ or } \qquad  
\BSFEP \geq c \cdot \BSFER. 
\end{equation} 
\end{theorem}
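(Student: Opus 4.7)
The plan is to deduce this corollary directly from the trade-off in Theorem~\ref{thm:LowerBound} by exploiting that its right-hand side takes the value $1$ when $\ASFE = 1/|Y|$ (the minimum possible value) and varies continuously as $\ASFE$ grows. Recall the identity $\ASFER = 1/|Y|$, so the first candidate inequality $\ASFE \geq c \cdot \ASFER$ is equivalent to $|Y|\ASFE \geq c$, i.e., $1/(|Y|\ASFE) \leq 1/c$.

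First I would introduce the single-variable function $g(t) = t - 2(|Y|-1)\sqrt{1-t}$, which is the right-hand side of~\eqref{lb} evaluated at $t = 1/(|Y|\ASFE)$. A short derivative computation, $g'(t) = 1 + (|Y|-1)/\sqrt{1-t} > 0$, shows that $g$ is strictly increasing on $(0,1)$ with $g(1) = 1$, and Theorem~\ref{thm:LowerBound} can then be restated compactly as $\BSFEP \geq g\!\left(1/(|Y|\ASFE)\right)$.

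Next, assuming the protocol is not already completely insecure (so $\BSFER < 1$), I would consider the function $h(c) := g(1/c) - c \cdot \BSFER$, continuous in $c$ on $[1, \infty)$, and observe that $h(1) = 1 - \BSFER > 0$. Continuity then produces some $c^* > 1$ with $h(c^*) > 0$. To finish, I would argue by contradiction: if both $\ASFE < c^* \cdot \ASFER$ and $\BSFEP < c^* \cdot \BSFER$ held simultaneously, then $1/(|Y|\ASFE) > 1/c^*$, and by monotonicity of $g$ together with Theorem~\ref{thm:LowerBound},
\[
\BSFEP \;\geq\; g\!\left(\tfrac{1}{|Y|\ASFE}\right) \;>\; g(1/c^*) \;>\; c^* \cdot \BSFER,
\]
contradicting the second assumption.

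The argument is essentially routine once the main trade-off is in hand; the only subtlety worth checking is the bookkeeping around the range of $1/(|Y|\ASFE)$, which is always in $(0,1]$ thanks to $\ASFE \geq 1/|Y|$, so $g$ is well-defined and monotone throughout the relevant region. A pleasant side feature is that $c^*$ depends only on the SFE task (through $|Y|$ and $\BSFER$) and not on the individual quantum protocol, so one obtains a single multiplicative gap $c^* > 1$ that applies uniformly to every quantum protocol implementing a given non-trivial SFE task.
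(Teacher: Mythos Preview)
Your argument is correct and follows essentially the same route as the paper: both deduce the theorem from Theorem~\ref{thm:LowerBound} by noting that the right-hand side of~\eqref{lb} equals $1$ at $\ASFE = 1/|Y|$ and then invoking continuity in the multiplicative parameter to extract a constant $c>1$. The only cosmetic difference is that the paper applies the intermediate value theorem to locate the exact $c$ solving $c = \tfrac{1}{\BSFER}\!\left(\tfrac{1}{c} - 2(|Y|-1)\sqrt{1-\tfrac{1}{c}}\right)$, whereas you simply use continuity of $h$ at $c=1$ to obtain some $c^*>1$ with $h(c^*)>0$, which suffices for the existence statement.
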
 

Before discussing how to find this constant, a word on our lower bound is in order. 
We chose to define what it means for a constant lower bound to be a multiplicative factor. 
This is because $\ASFER$ and $\BSFER$ may be dramatically different (as we demonstrate shortly). 
Therefore, having a constant additive factor could be unevenly weighted between cheating Alice and Bob and, we feel, would be less insightful in those cases. 
However, using our bound one can optimize and find an additive constant if one so desires. 

To find this constant $c > 1$, note that our lower bound on $\BSFEP$ (the right-hand side of Inequality~\eqref{lb}) is a continuous, decreasing function with respect to $\ASFE$. 
Therefore, if we assume 
\begin{equation}\label{ASFEc}
\ASFE \leq \frac{c_A}{|Y|},  
\end{equation} 
for some fixed constant $c_A \geq 1$, then we may conclude via our bound that 
\begin{equation} \label{lb2}
\BSFEP \geq \frac{1}{c_A}-2\left(\left|Y\right|-1\right)\sqrt{1-\frac{1}{c_A}}.  
\end{equation}   
Now, assuming that 
\begin{equation}\label{BSFEc}
\BSFEP = c_B \cdot \BSFER 
\end{equation}  
for some $c_B \geq 1$, we now have the inequality  
\begin{equation} \label{lb3}
c_B \geq \frac{1}{\BSFER} \left( \frac{1}{c_A}-2\left(\left|Y\right|-1\right)\sqrt{1-\frac{1}{c_A}} \right).   
\end{equation}   

We shall now assume that $\BSFER < 1$ so that $\frac{1}{\BSFER} > 1$. 
Note that when $c_A = 1$, we have the right-hand side of \eqref{lb3} equalling $\frac{1}{\BSFER} > 1$ and when $c_A = \frac{1}{\BSFER}$ we have the right-hand side being strictly less than $1$. 
Thus, by continuity of the right-hand side and the intermediate value theorem, we know there exists a constant $c > 1$ satisfying the equation  
 \begin{equation} \label{exact}
c = \frac{1}{\BSFER} \left( \frac{1}{c}-2\left(\left|Y\right|-1\right)\sqrt{1-\frac{1}{c}} \right).   
\end{equation}   
Note that this constant $c > 1$ is exactly what we want, since if $c_A \leq c$ then we have $c_B \geq c$. 

Now, in theory one can solve for $c$ above for a general SFE task, but it is complicated and perhaps not very insightful. 
However, when it comes to particular instances or families of SFE, then one can easily solve the above equation and get a constant (and possibly decent) lower bound for any quantum protocol for that task. 
We demonstrate this several times below.  
               
    
\subsection{Applications} \label{intro_App}  

Since our bound is general, we can apply it to many different scenarios. 
However, since each scenario is quite different and requires discussion, we delegate these discussions to their own section and simply summarize the cryptographic tasks below and a few of the special cases in which we found some exact formulas for lower bounds. 
Note that all of the special cases presented below are \emph{new lower bounds} as far as we are aware. 

\begin{itemize} 
\item \textbf{$1$-out-of-$n$ oblivious transfer} (Subsection~\ref{OT}). 
This is where Alice has a database and Bob wishes to learn one item (his input is an index). 
We present lower bounds on either how much Alice can learn Bob's index or how much Bob can learn all of Alice's database. 
A special case is when Alice has $3$ bits and Bob wants to learn $1$ of them. 
We present a new lower bound that either 
\begin{equation} 
\BOT \gtrapprox 0.2581 > 0.2500 
\quad \text{ or } \quad 
\AOT \gtrapprox 0.3442 > 0.3333.
\end{equation} 
Note that we define the cheating probability symbols above in Section~\ref{sect:app}, but they should be clear from context for this abbreviated discussion. This is also the case for the cheating probability symbols below.


\item \textbf{$k$-out-of-$n$ oblivious transfer}  (Subsection~\ref{KNOT}). 
This is the same as $1$-out-of-$n$ oblivious transfer except Bob's input is now a proper subset instead of an index (so Bob learns $k < n$ entries in Alice's database). 
We present lower bounds on either how much Alice can learn Bob's proper subset or how much Bob can learn all of Alice's database. 
A special case is when Alice has $4$ bits and Bob wants to learn $2$ of them. 
We present a new lower bound that either 
\begin{equation} 
\BKNOT \gtrapprox 0.2514 > 0.2500 
\quad \text{ or } \quad 
\AKNOT \gtrapprox 0.1676 > 0.1667.
\end{equation} 


\item \textbf{XOR oblivious transfer}  (Subsection~\ref{XOT}). 
This is similar to $1$-out-of-$2$ oblivious transfer (where Alice's database consists of $2$ bit \emph{strings}) but Bob now has a third option of learning the bit-wise XOR of the two strings. 
We present lower bounds on either how much Alice can learn Bob's choice (first string, second string, or the XOR) or how much Bob can learn both of Alice's strings. 
A special case is when Alice's strings have length $1$ (so, they are just bits). 
We present a new lower bound that either 
\begin{equation} 
\BXOT \gtrapprox 0.5073 > 0.5000
\quad \text{ or } \quad 
\AXOT \gtrapprox 0.3382 > 0.3333.
\end{equation}


\item \textbf{Equality/{one-way oblivious identification}}   (Subsection~\ref{EQ}). 
This is when Alice and Bob each have the same set of inputs and Bob learns whether or not their inputs are equal. 
We present lower bounds on either how much Alice or Bob can learn the other's input. 
A special case is when the input set has cardinality $3$.  
We present a new lower bound that either 
\begin{equation} 
\BEQ \gtrapprox 0.671 > 0.667
\quad \text{ or } \quad 
\AEQ \gtrapprox 0.3355 > 0.3333.
\end{equation}


\item \textbf{Inner product}  (Subsection~\ref{IP}). 
This is when Alice and Bob each input an $n$-bit string and Bob learns their inner product. 
We present lower bounds on either how much Alice or Bob can learn the other's input. 
A special case is when $n=3$.   
We present a new lower bound that either 
\begin{equation} 
\BIP \gtrapprox 0.251 > 0.250 
\quad \text{ or } \quad 
\AIP \gtrapprox 0.1434 > 0.1429.
\end{equation}


\item \textbf{Millionaire's problem}  (Subsection~\ref{MP}). 
This is when (rich) Alice and Bob have lots of money and Bob wishes to learn who is richer without either revealing their wealth. 
A special case is when $n=10^9$ (bounding each of their bank accounts at a billion dollars).   
We present a new lower bound that either 
\begin{equation} 
\begin{split}
\BMP &\gtrapprox 2 \times 10^{-9} + 5 \times 10^{-28}  > 2 \times 10^{-9}
\quad \text{ or } \quad \\
\AMP &\gtrapprox 1 \times 10^{-9} + 1 \times 10^{-18} + 1.25 \times 10^{-27} > 1 \times 10^{-9} + 1 \times 10^{-18} + 1 \times 10^{-27}.
\end{split}
\end{equation} 
We can also study the version geared towards academics by setting $n = 10$. The substitution of sad faces indicates Bob and Alice's attitudes towards their financial situations.
We present a new lower bound that either 
\begin{equation} 
\BGS \gtrapprox 0.2005 > 0.2000
\quad \text{ or } \quad 
\AGS \gtrapprox 0.1114 > 0.1111.
\end{equation} 
Therefore, some information about either Alice or Bob's wealth is necessarily leaked. 
\end{itemize} 
   
Each of these cryptographic tasks are described further and analyzed in Section~\ref{sect:app}. 

 
\subsection{Proof idea and key concepts} 

There are two main ingredients in proving our lower bound which we discuss at a high level below, and continue in more detail in the following sections. 
The magic ingredient is Kitaev's constant lower bound for die rolling~\cite{K02, AS10}. 
Effectively what we do is use a generic SFE protocol to create a die rolling protocol, then apply Kitaev's lower bound. 
However, the glue that makes SFE and die rolling play well together is a new technical result that we prove which deals with sequential gentle measurements, which we discuss next.


\subsubsection{Sequential gentle measurements}

The idea behind much of quantum cryptography is the concept of measurement disturbance. 
To put it simply, measuring to obtain certain information from a quantum state may cause it to collapse, possibly rendering it unusable for future purposes, or to simply alert honest parties that a cheating attempt was made. 
However, there is a concept of a \emph{gentle measurement}, which is described at a high level below. 

\medskip 
\noindent 
\textbf{Gentle measurement lemma ($\epsilon$-free version):} \textit{If a measurement outcome has a large probability of occurring, then the measured quantum state is not largely disturbed if that measurement outcome does indeed happen. (See references~\cite{W19, W99} or Section~\ref{lgm section} to see formal statements of gentle measurement lemmas.)} 
\medskip 

How does this help us? Well, suppose for a cheating Bob who wishes to learn every $f(x,y)$, for all $y$, he may wish to measure some quantum state \emph{several times}. 
Suppose for a fixed $y_1 \in Y$ that Bob can learn $f(x, y_1)$ with probability close to $1$. 
Then, if he were to measure it, and achieve the correct value, then the state is not greatly disturbed, and thus more information can possibly be extracted. 
If a second measurement can extract the correct value of $f(x, y_2)$ for some $y_2 \in Y \setminus \{ y_1 \}$ with a high probability, we can repeat the process. 

Now, we (intentionally) glossed over the concept of \emph{learning}, that is, we did not precisely define it means to \emph{learn} the correct value, in our cryptographic context.  
We elaborate on this in Section~\ref{lgm section}. 
However, it can be made precise and be put into a framework suitable for the application of a modified gentle measurement lemma. 
For now, we just state the main technical result of this paper below, and leave its proof for Subsection~\ref{encodings}. 

\begin{lemma}[Sequential measurement lemma] \label{lem:SGMLformal}
Let $f_1, \ldots, f_n : X \to B$ be fixed functions and suppose Bob has a quantum encoding of $x \in X$ (where $x$ is chosen from a probability distribution known to Bob). 
Suppose Bob can learn $f_i(x)$ with probability $p_i$ for each $i \in \{ 1, \ldots, n \}$ and let $p = \frac{1}{n} \sum_{i=1}^n p_i$ be his average success probability of learning the function values. 
Then Bob can learn \emph{all} values $f_1(x), \ldots, f_n(x)$ with probability at least 
\begin{equation} \label{SGMLinformal}
p - 2(n-1)\sqrt{1-p}. 
\end{equation} 
\end{lemma}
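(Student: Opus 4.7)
My plan is to cast Bob's strategy as a sequence of POVM elements acting on a quantum--classical joint state, apply the standard gentle measurement lemma inductively to control state disturbance after the first $n-1$ measurements, and then derandomise the order of measurements via an averaging argument.

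First I would lift Bob's encoding to a joint state $\sigma = \sum_{x \in X} \mu(x)\,\kb{x} \otimes \rho_x$, where $\mu$ is the prior distribution on $x$ known to Bob and the first register (inaccessible to Bob but useful for analysis) records $x$. The hypothesis that Bob can learn $f_i(x)$ with probability $p_i$ means there is a POVM $\{M_i^b\}_{b\in B}$ on Bob's system with $\sum_x \mu(x)\,\mathrm{Tr}(M_i^{f_i(x)}\rho_x) = p_i$. Setting
\[
F_i \;=\; \sum_{x \in X} \kb{x} \otimes M_i^{f_i(x)}
\]
yields an operator with $0 \leq F_i \leq I$ and $\mathrm{Tr}(F_i \sigma) = p_i$; the event ``Bob's $i$-th outcome equals $f_i(x)$'' corresponds, on the joint system, to exactly the POVM element $F_i$.

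Next I would analyse Bob's sequential strategy: for an order $\pi$ on $\{1, \ldots, n\}$, he performs $M_{\pi(1)}, \ldots, M_{\pi(n)}$ one after the other on the post-measurement state and outputs the recorded outcomes. A direct calculation on the joint system shows the probability that every outcome is correct equals $\mathrm{Tr}(\sigma_n)$, where $\sigma_0 = \sigma$ and $\sigma_k = \sqrt{F_{\pi(k)}}\,\sigma_{k-1}\,\sqrt{F_{\pi(k)}}$. Applying the gentle measurement lemma in the form $\|\sqrt{F}\rho\sqrt{F} - \rho\|_1 \leq 2\sqrt{1-\mathrm{Tr}(F\rho)}$, together with the contractivity $\|\sqrt{F_i}\,A\,\sqrt{F_i}\|_1 \leq \|A\|_1$ (which follows from $\|\sqrt{F_i}\|_\infty \leq 1$) and the triangle inequality, a short induction gives
\[
\|\sigma_{n-1} - \sigma\|_1 \;\leq\; 2\sum_{j<n}\sqrt{1-p_{\pi(j)}}.
\]
Consequently the success probability satisfies
\[
\mathrm{Tr}(\sigma_n) \;=\; \mathrm{Tr}(F_{\pi(n)}\,\sigma_{n-1}) \;\geq\; p_{\pi(n)} - \|\sigma_{n-1}-\sigma\|_1 \;\geq\; p_{\pi(n)} - 2\!\sum_{i \neq \pi(n)} \sqrt{1-p_i}.
\]

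Finally I would extract the claimed bound by averaging this lower bound over a uniformly random choice of the last measurement $\pi(n)$ (the order of the other $n-1$ is immaterial). The average equals $p - \tfrac{2(n-1)}{n}\sum_i \sqrt{1-p_i}$, and Cauchy--Schwarz, $\sum_i \sqrt{1-p_i} \leq n\sqrt{1-p}$, reduces this to $p - 2(n-1)\sqrt{1-p}$. Since this is only an expectation, some deterministic choice of $\pi(n)$ attains at least this value, yielding the strategy promised by the lemma. The main obstacle I anticipate is matching the asymmetric factor $(n-1)\sqrt{1-p}$ rather than the weaker $n\sqrt{1-p}$ one would get from treating all $n$ steps symmetrically; the key realisation is to single out one measurement and bound it via the inner product $\mathrm{Tr}(F_{\pi(n)}\sigma_{n-1})$ directly, then randomise which measurement plays that role so that the Cauchy--Schwarz step sees the average $p$.
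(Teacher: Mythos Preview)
Your proposal is correct and follows essentially the same route as the paper: you lift to the joint classical--quantum state, define the block-diagonal ``success'' operators $F_i$ (the paper's $\Lambda_i$), use the gentle measurement lemma inductively to control the effect of the first $n-1$ measurements, randomise which measurement occupies the distinguished slot, and finish with the same Cauchy--Schwarz step. The only cosmetic difference is that the paper packages the inductive step as a standalone operator inequality on $\langle\rho,\sqrt{\Lambda_n}\cdots\Lambda_1\cdots\sqrt{\Lambda_n}\rangle$ via H\"older, whereas you track the trace-norm drift $\|\sigma_{k}-\sigma\|_1$ of the post-measurement states and apply H\"older only at the last step; these are the same computation arranged differently.
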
  

Notice that if $p \approx 1$ (meaning that Bob has a high average success probability of learning the function values) then he can learn \emph{all} the values with probability still very close to $1$. 
Note that this aligns with the intuition one obtains from the gentle measurement lemma. 
The measurement that achieves the success probability in Lemma~\ref{lem:SGMLformal} is given in Subsection~\ref{encodings}. 


\subsubsection{Die rolling}
\emph{Die rolling} (DR) is a two-party cryptographic task akin to coin flipping, where Alice and Bob try to agree on a value $n \in \{ 0, 1, \ldots, N-1 \}$. 
The goals when designing a die rolling protocol are outlined below. 
\begin{enumerate}
  \item \emph{Completeness:} If both parties are honest then their outcomes are uniformly random and identical.
  \item \emph{Soundness against cheating Bob:} Cheating Bob cannot influence honest Alice's outcome distribution away from uniform. 
  \item \emph{Soundness against cheating Alice:} Cheating Alice cannot influence honest Bob's outcome distribution away from uniform. 
\end{enumerate} 

For this work, we only consider perfectly complete die rolling protocols. 
To quantify the soundness of a die rolling protocol, we define the following symbols. 
\begin{center}
\begin{tabularx}{\textwidth}{rX}
  $\BDR$: & The maximum probability with which cheating Bob can influence honest Alice to output the number $n$ without Alice aborting. \\
  $\ADR$: & The maximum probability with which cheating Alice can force honest Bob to output the number $n$ without Bob aborting. \\ 
\end{tabularx}
\end{center}

Kitaev proved in~\cite{K02} that when $N = 2$,  any \emph{quantum} protocol for die rolling satisfies 
\begin{equation}
\mathrm{A}_{\mathrm{DR,0}}\mathrm{B}_{\mathrm{DR,0}} \geq \frac{1}{2}
\quad \text{ and } \quad 
\mathrm{A}_{\mathrm{DR,1}}\mathrm{B}_{\mathrm{DR,1}} \geq \frac{1}{2}.  
\end{equation}  
Note that die rolling with $N = 2$ is simply referred to as \emph{(strong) coin flipping} as Alice and Bob decide on one of two outcomes. 
Note that coin flipping is a much more studied task than die rolling, the latter being a generalization of the former.  
Kitaev's proof of these inequalities for coin flipping easily generalizes to similar inequalities for die rolling, namely that for any \emph{quantum} protocol for die rolling, we have 
\begin{equation} \label{KitDR1}
\ADR \BDR \geq \frac{1}{N}, \; \text{ for all } \; n \in \{ 0, 1, \ldots, N-1 \}. 
\end{equation} 
This is indeed a constant lower bound, as we would strive to have $\ADR = \BDR = \frac{1}{N}$ for all $n$. 
However, Inequality~\eqref{KitDR1} implies that 
\begin{equation} \label{KitDR2}
\max \{ \ADR, \BDR \} \geq \frac{1}{\sqrt N}, \; \text{ for all } \; n \in \{ 0, 1, \ldots, N-1 \} 
\end{equation}  
making it impossible to get anywhere near perfect security. 


\subsubsection{Die rolling via secure function evaluation - gluing the two ingredients together}

The first step is to create a DR protocol from a fixed SFE protocol, as shown below.  

\begin{protocol}[DR from SFE] \label{DRprotocol}
\quad 
\begin{itemize}
\item Alice and Bob input uniformly random chosen inputs into a SFE protocol such that Bob learns $f(x,y)$. 
\item Alice selects a uniformly chosen $b \in Y$, independent from the SFE protocol. 
She sends $b$ to Bob.  
\item Bob reveals his SFE input $y \in Y$ and also his SFE output $f(x,y)$. 
\item Alice computes $f(x,y)$ using $x$ and $y$. 
If Bob's function value he sent to Alice does not match Alice's computation of the function, she aborts the protocol. 
\item If Alice does not abort, they both output $(b + y) \mod |Y|$. We assume an ordering of the elements of $Y$ is known to both Alice and Bob before the protocol, i.e., we may think of them as elements of the set $\{ 1, \ldots, |Y| \}$.
\end{itemize}
\end{protocol}  

Protocol~\ref{DRprotocol} is pictorially shown in Figure~\ref{fig_sfe_dr1} below. 
  
\begin{figure}[h] 
\centering
\begin{tikzpicture}[node distance=0.2cm]
\node[draw, minimum width=6cm, minimum height=5.75cm] (Outer_DR) {};
\node[draw,minimum width=5.5cm, minimum height=2cm] at (0,0.875) (SFE_DR) {SFE};
\node (Alice) [above=of Outer_DR.121] {Alice};
\node (Bob) [above=of Outer_DR.56] {Bob};
\draw[->] ([yshift=1cm]SFE_DR.150) --  node[fill=white] {\ssmall $x \in X$} +(0pt,-1cm);
\draw[->] ([yshift=1cm]SFE_DR.27) -- node[fill=white] {\ssmall $y \in Y$} +(0pt,-1cm);
\draw[->] (SFE_DR.333) -- node[fill=white] {\ssmall $f(x,y) \in B$} +(0pt,-1cm); 
\draw[<-] (2.75,-1.5)--node[fill=white] {\ssmall $b \in Y$}+(-5.5,0);
\draw[->] (2.75,-2)--node[fill=white] {\ssmall $y\text{, }f(x,y)$}+(-5.5,0);
\draw[->] (Outer_DR.239) -- node[fill=white] {\ssmall $(b+y)\text{ mod }\left|Y\right|$} +(0pt,-1cm);
\draw[->] (Outer_DR.304) -- node[fill=white] {\ssmall $(b+y)\text{ mod }\left|Y\right|$} +(0pt,-1cm);
\node at (-1.7,-2.5) (AC) {\ssmall Alice checks $f(x,y)$};
\end{tikzpicture}
\caption{Protocol~\ref{DRprotocol}: Die rolling via secure function evaluation.} 
\label{fig_sfe_dr1}
\end{figure}
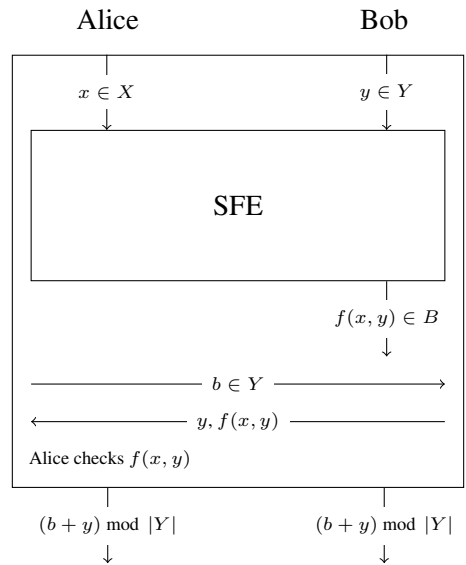  

We now describe what Alice and Bob may do to cheat the die rolling protocol. 
 
\paragraph{Cheating Alice.} 
Suppose cheating Alice wants to force honest Bob to output the number $0$. 
In this case, Alice must send $b$ in the second to last message such that $b = y$. 
Since she may not know $y$, the probability she can successfully cheat is equal to the maximum probability with which she can learn $y$ from the SFE protocol. 
However, this is precisely the definition of $\ASFE$. 
Thus, the case of cheating Alice is simple, we have that $\mathrm{A}_{\mathrm{DR,0}} =  \ASFE$.  

\paragraph{Cheating Bob.} 
Similar to cheating Alice, we wish to relate how much Bob can cheat in the DR protocol, say the quantity $\mathrm{B}_{\mathrm{DR,0}}$, and how much he can cheat in the SFE protocol, namely $\BSFEP$. 
Suppose cheating Bob wants to force an honest Alice to output the number $0$. 
In this case, he needs to send back $y$ such that $y = b$ in the last message. 
However, for Alice to accept this last message, he must also correctly \emph{learn} the value $f(x,y)$ from his part of the state after the SFE subroutine. 
In other words, before he sends his last message, he has an \emph{encoding} of $x$ from which he may measure to learn something.  
Since Alice's message $b$ is randomly chosen, independent of the SFE protocol, he is tasked with revealing a $y$ with uniform probability. 
To say it another way, $\mathrm{B}_{\mathrm{DR,0}}$ is equal to the average probability that Bob is able to learn $f(x,y)$, for each $y$, after the SFE subroutine.  

Now, to obtain a cheating strategy for Bob in SFE, consider the following. 
Imagine if Bob uses his optimal die rolling strategy to communicate with Alice to create the encoding of $x$ as described above at the end of the SFE protocol. 
Well, we know the average success probability of Bob learning each function value; it is equal to  $\mathrm{B}_{\mathrm{DR,0}}$, as explained above. 
If we now apply the sequential gentle measurement lemma, Lemma~\ref{lem:SGMLformal}, we see that Bob can learn \emph{all} the values of $f(x,y)$ with probability at least  
\begin{equation} 
\mathrm{B}_{\mathrm{DR,0}} - 2 (|Y| - 1) \sqrt{1 - \mathrm{B}_{\mathrm{DR,0}}}. 
\end{equation} 
Since this is a valid strategy for Bob to learn all the values of $f(x,y)$, it is a \emph{lower bound} on $\BSFEP$. 

Collecting all the above pieces of information together, and adding Kitaev's lower bound, we have 
\begin{itemize} 
\item $\ASFE = \mathrm{A}_{\mathrm{DR,0}}$; 
\item $\BSFEP \geq \mathrm{B}_{\mathrm{DR,0}} - 2 (|Y| - 1) \sqrt{1 - \mathrm{B}_{\mathrm{DR,0}}}$; 
\item $\mathrm{A}_{\mathrm{DR,0}} \cdot \mathrm{B}_{\mathrm{DR,0}} \geq \frac{1}{|Y|}$.  
\end{itemize} 
Combining these we get a proof of our main theorem, Theorem~\ref{thm:LowerBound}.  
          
 
\section{Learning and gentle measurements}  \label{lgm section}

In this section we first discuss the gentle measurement lemma and  then generalize the concept to fit our needs. 
Then, we discuss the context in which we consider \emph{learning} and show how to apply our generalized gentle measurement lemma. 

\subsection{Gentle measurements} 
  
Before we dive into gentle measurements, we must first define some essential matrix operations. 
Consider two matrices $A$ and $B \in \mathbb{C}^{m \times n}$. 
The trace inner product is defined as
\begin{equation} \langle A, B\rangle = \text{Tr}\left(A^*B\right)  
\end{equation}
where $A^*$ represents the complex conjugate transpose of $A$.  
The trace norm of a matrix $A$ is given by
\begin{equation}\left\Vert A\right\Vert_{tr} = \text{Tr}(\sqrt{A^*A}).
\end{equation} 
The operator norm of a matrix $A$ is given by 
\begin{equation} 
\| A \|_{op} = \sup \left\{ \| Av \|_2 : \| v \|_2 = 1 \right\} 
\end{equation}  
where $\| v \|_2$ denotes the Euclidean norm $\sqrt{\ip{v}{v}}$. 

The idea behind gentle measurements is that if a measurement operator, when applied to a quantum state, produces a given result with high probability, then the post-measured state will be relatively close to the original state. For our purposes, this allows for more information to be gleaned from the state in a successive measurement. This process is formally scoped below. 

\begin{lemma}[Gentle measurement operator~\cite{W19, W99}] \label{gml}
Consider a density operator $\rho$ and a measurement operator $\Lambda$ where $0 \leq \Lambda \leq I$. Suppose that 
\begin{equation} 
\ip{\Lambda}{\rho} \geq 1 - \varepsilon, 
\end{equation}
where $\varepsilon \in [0,1]$. 
Then we have 
\begin{equation} 
\| \rho - \sqrt{\Lambda} \rho \sqrt{\Lambda} \|_{tr} \leq 2 \sqrt{\varepsilon}.  
\end{equation}  
\end{lemma}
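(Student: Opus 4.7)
The plan is to bound the trace norm $\|\rho - \sqrt{\Lambda}\rho\sqrt{\Lambda}\|_{tr}$ by first performing a telescoping split and then applying a matrix Hölder inequality to each piece. Concretely, I would write
$\rho - \sqrt{\Lambda}\rho\sqrt{\Lambda} = (I - \sqrt{\Lambda})\rho + \sqrt{\Lambda}\rho(I - \sqrt{\Lambda})$
and invoke the triangle inequality, reducing the problem to showing that each of the two summands has trace norm at most $\sqrt{\varepsilon}$.

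The engine of the proof is a single operator inequality that follows from the assumption $0 \leq \Lambda \leq I$: on the spectrum of $\Lambda$ one has $\lambda \leq \sqrt{\lambda}$, so $\Lambda \leq \sqrt{\Lambda}$, and therefore $(I - \sqrt{\Lambda})^2 = I - 2\sqrt{\Lambda} + \Lambda \leq I - \Lambda$. Combining this with the hypothesis $\langle \Lambda, \rho \rangle \geq 1 - \varepsilon$ immediately yields $\text{Tr}[(I - \sqrt{\Lambda})^2 \rho] \leq \varepsilon$, which I would recast as $\|(I - \sqrt{\Lambda})\sqrt{\rho}\|_{HS}^2 \leq \varepsilon$, where $\|\cdot\|_{HS}$ denotes the Hilbert--Schmidt norm.

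With that estimate in hand, I would bound each summand using the matrix Hölder inequality $\|AB\|_{tr} \leq \|A\|_{HS}\|B\|_{HS}$. For the first summand, factor $(I-\sqrt{\Lambda})\rho$ as $[(I-\sqrt{\Lambda})\sqrt{\rho}] \cdot \sqrt{\rho}$; the first factor has Hilbert--Schmidt norm at most $\sqrt{\varepsilon}$ by the previous step, and $\|\sqrt{\rho}\|_{HS} = \sqrt{\text{Tr}(\rho)} = 1$. For the second summand, factor $\sqrt{\Lambda}\rho(I-\sqrt{\Lambda}) = [\sqrt{\Lambda}\sqrt{\rho}] \cdot [\sqrt{\rho}(I-\sqrt{\Lambda})]$; here $\|\sqrt{\Lambda}\sqrt{\rho}\|_{HS}^2 = \text{Tr}(\Lambda\rho) \leq 1$, while the second factor equals $\|(I-\sqrt{\Lambda})\sqrt{\rho}\|_{HS} \leq \sqrt{\varepsilon}$ as before. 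Summing the two contributions gives the desired $2\sqrt{\varepsilon}$ bound.

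I do not anticipate a serious obstacle. The only genuine subtlety is the operator inequality $(I-\sqrt{\Lambda})^2 \leq I - \Lambda$, which is what allows us to trade the hypothesis on $\Lambda$ (which only controls an \emph{average} against $\rho$) for quantitative control of $I - \sqrt{\Lambda}$ weighted by $\sqrt{\rho}$; without this trick one would be tempted to bound $I - \sqrt{\Lambda}$ in operator norm, which the hypothesis simply does not support. Everything following that step is routine matrix bookkeeping.
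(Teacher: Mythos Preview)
Your proof is correct. The telescoping split, the operator inequality $(I-\sqrt{\Lambda})^2 \leq I-\Lambda$, and the two applications of the Hilbert--Schmidt H\"older bound all go through exactly as you describe.

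There is nothing to compare against, however: the paper does not prove this lemma at all. It is stated with a citation to the literature (Wilde's textbook and Winter's 1999 paper) and then used as a black box in the proof of the subsequent Lemma~\ref{sgml} on sequential gentle measurements. So your write-up would in fact supply a proof where the paper has none; the argument you give is essentially the standard one found in those references.
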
 
   
We now use this to prove the following.    
   
\begin{lemma}[Sequential gentle measurement operators] \label{sgml}
Consider a density operator $\rho$ and measurement operators $\Lambda_1, \ldots, \Lambda_n$ where $0 \leq \Lambda_k \leq I$ for each $k \in \{ 1, \ldots, n \}$, where $n \geq 2$. 
Suppose that 
\begin{equation} 
\ip{\Lambda}{\rho} \geq 1 - \varepsilon_k,
\end{equation}
where $\varepsilon_k \in [0,1]$ for each $k \in \{ 1, \ldots, n \}$. 
Then we have 
\begin{equation}
\ip{\rho}{\sqrt{\Lambda_n} \cdots \sqrt{\Lambda_2} \Lambda_1  \sqrt{\Lambda_2} \cdots \sqrt{\Lambda_n}} \geq 1 - \epsilon_1 - 2 \sum_{i=2}^n \sqrt{\varepsilon_i}. 
\end{equation}
\end{lemma}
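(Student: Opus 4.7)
The plan is to set up a telescoping sum that keeps $\rho$ in the role of a genuine density operator throughout the argument, rather than letting it degrade into a subnormalized object. Define $\Lambda_1^{(1)} := \Lambda_1$ and, for $j = 2, \ldots, n$,
\begin{equation}
\Lambda_1^{(j)} := \sqrt{\Lambda_j}\,\Lambda_1^{(j-1)}\,\sqrt{\Lambda_j},
\end{equation}
so that $\Lambda_1^{(n)} = \sqrt{\Lambda_n}\cdots\sqrt{\Lambda_2}\,\Lambda_1\,\sqrt{\Lambda_2}\cdots\sqrt{\Lambda_n}$ is exactly the operator in the statement. The quantity to bound from below is then $\ip{\rho}{\Lambda_1^{(n)}}$. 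A quick induction shows $0 \leq \Lambda_1^{(j)} \leq I$: if $0 \leq M \leq I$, then $\sqrt{\Lambda_j}\,M\,\sqrt{\Lambda_j} \leq \sqrt{\Lambda_j}\,I\,\sqrt{\Lambda_j} = \Lambda_j \leq I$. In particular, $\|\Lambda_1^{(j-1)}\|_{op} \leq 1$ for every $j$.

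Next I would telescope and use cyclicity of the trace inner product to rewrite each difference:
\begin{equation}
\ip{\rho}{\Lambda_1^{(n)}} = \ip{\rho}{\Lambda_1} + \sum_{j=2}^n \ip{\sqrt{\Lambda_j}\,\rho\,\sqrt{\Lambda_j} - \rho}{\Lambda_1^{(j-1)}}.
\end{equation}
By the Schatten-norm duality $|\ip{A}{B}| \leq \|A\|_{tr}\|B\|_{op}$, together with the bound $\|\Lambda_1^{(j-1)}\|_{op} \leq 1$ just established, each summand is at most $\|\sqrt{\Lambda_j}\,\rho\,\sqrt{\Lambda_j} - \rho\|_{tr}$ in absolute value. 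The key move is then to apply Lemma~\ref{gml} directly to the honest density operator $\rho$ and the measurement $\Lambda_j$: the hypothesis $\ip{\Lambda_j}{\rho} \geq 1 - \varepsilon_j$ yields $\|\sqrt{\Lambda_j}\,\rho\,\sqrt{\Lambda_j} - \rho\|_{tr} \leq 2\sqrt{\varepsilon_j}$. Combining these estimates with $\ip{\rho}{\Lambda_1} \geq 1 - \varepsilon_1$ produces the claimed inequality.

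The subtlety worth flagging, and the place where a naive approach fails, is the tempting alternative of setting $\sigma_k := \sqrt{\Lambda_k}\,\sigma_{k-1}\,\sqrt{\Lambda_k}$ with $\sigma_1 := \rho$ and iterating Lemma~\ref{gml} on the evolving state: the $\sigma_k$ become subnormalized, and the hypotheses only control $\ip{\Lambda_j}{\rho}$ rather than $\ip{\Lambda_j}{\sigma_{j-1}}$, so the chain of gentle-measurement invocations cannot be closed. Pushing the $\sqrt{\Lambda_j}$ factors onto $\Lambda_1$ instead of onto $\rho$ sidesteps this entirely, since every invocation of the gentle measurement lemma is then applied to the unchanged density operator $\rho$, and the operator-norm bound on $\Lambda_1^{(j-1)}$ absorbs the rest.
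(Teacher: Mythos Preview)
Your proof is correct and follows essentially the same approach as the paper's: both move the $\sqrt{\Lambda_j}$ factors from the middle operator onto $\rho$ via cyclicity, apply the trace--operator norm duality together with the bound $\|\Lambda_1^{(j-1)}\|_{op}\le 1$, and then invoke Lemma~\ref{gml} on the undisturbed $\rho$. The paper packages this as an induction on $n$ while you write out the telescoping sum directly, but the underlying computation is identical (and your operator-ordering argument for $\|\Lambda_1^{(j-1)}\|_{op}\le 1$ is a slightly cleaner alternative to the paper's use of submultiplicativity).
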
 

\begin{proof} 
We prove this by induction. 
Base case: $n = 2$. 
Consider the following quantity  
\begin{equation}  
| \ip{\rho}{\Lambda_1} - \ip{\rho}{\sqrt{\Lambda_2} \Lambda_1 \sqrt{\Lambda_2}} |  
= 
| \ip{\rho}{\Lambda_1} - \ip{\sqrt{\Lambda_2} \rho \sqrt{\Lambda_2}}{\Lambda_1} | 
= 
| \ip{\rho - \sqrt{\Lambda_2} \rho \sqrt{\Lambda_2}}{\Lambda_1} |. 
\end{equation} 
By applying H\"{o}lder's inequality, we get 
\begin{equation}
| \ip{\rho - \sqrt{\Lambda_2} \rho \sqrt{\Lambda_2}}{\Lambda_1} |
\leq 
\| \rho - \sqrt{\Lambda_2} \rho \sqrt{\Lambda_2} \|_{tr} \| \Lambda_1 \|_{op} 
\leq 
2 \sqrt{\varepsilon_2},  
\end{equation} 
where the last inequality follows from the gentle measurement operator lemma (Lemma~\ref{gml}) and the assumption that $0 \leq \Lambda_1 \leq I$.  
This implies that 
\begin{equation} 
\ip{\rho}{\sqrt{\Lambda_2} \Lambda_1 \sqrt{\Lambda_2}} \geq \ip{\rho}{\Lambda_1} - 2 \sqrt{\varepsilon_2} \geq 1 - \varepsilon_1 - 2 \sqrt{\varepsilon_2}. 
\end{equation} 

Inductive step: Assume it is true up to some $k \in \{ 3, \ldots, n-1 \}$.  
We have, again, that 
\begin{align}  
& | \ip{\rho - \sqrt{\Lambda_{k+1}} \rho \sqrt{\Lambda_{k+1}}}{\sqrt{\Lambda_k} \cdots \sqrt{\Lambda_2} \Lambda_1 \sqrt{\Lambda_2} \cdots \sqrt{\Lambda_k}} | \\ 
& \leq 
\| \rho - \sqrt{\Lambda_{k+1}} \rho \sqrt{\Lambda_{k+1}} \|_{tr} 
\| \sqrt{\Lambda_k} \cdots \sqrt{\Lambda_2} \Lambda_1      \sqrt{\Lambda_2} \cdots \sqrt{\Lambda_k} \|_{op} \\ 
& \leq 
\| \rho - \sqrt{\Lambda_{k+1}} \rho \sqrt{\Lambda_{k+1}} \|_{tr} 
\| \sqrt{\Lambda_k} \|_{op} \cdots \| \sqrt{\Lambda_2} \|_{op} \| \Lambda_1 \|_{op} \| \sqrt{\Lambda_2} \|_{op} \cdots \| \sqrt{\Lambda_k} \|_{op} \\ 
& \leq 
2 \sqrt{\varepsilon_{k+1}},  
\end{align} 
noting that the operator norm is submultiplicative. 
Similar to the base case, this implies that 
\begin{align} 
& \ip{\rho}{\sqrt{\Lambda_{k+1}} \cdots \sqrt{\Lambda_2} \Lambda_1 \sqrt{\Lambda_2} \cdots \sqrt{\Lambda_{k+1}}} \\ 
& \geq 
\ip{\rho}{\sqrt{\Lambda_{k}} \cdots \sqrt{\Lambda_2} \Lambda_1 \sqrt{\Lambda_2} \cdots \sqrt{\Lambda_{k}}} - 2 \sqrt{\varepsilon_{k+1}} \\ 
& \geq \left( 1 - \varepsilon_1 - 2 \sum_{i=2}^k \sqrt{\varepsilon_i} \right) - 2 \sqrt{\varepsilon_{k+1}} \\ 
& = 1 - \varepsilon_1 - 2 \sum_{i=2}^{k+1} \sqrt{\varepsilon_i} 
\end{align} 
as desired. 
\end{proof}
  
Note that this is very similar to, but slightly stronger than, the bound implied by the Quantum Union Bound~\cite{W13, ANSV02, A06}. 
However, since we want constant lower bounds, this version helps us to get better constants.  
    
\subsection{Quantum encodings, and proof of Lemma~\ref{lem:SGMLformal}} 
\label{encodings} 

In this section, we pin down what it means for Bob to learn something about Alice's input. 

We may assume that Alice creates the following state 
\begin{equation} 
\sum_{x \in X} p_x \kb{x}
\end{equation} 
where $p_x$ is the probability of her choosing $x$, then control all her actions on it. 
That is, this is a classical register that Alice holds. 
After some communication, Alice and Bob will share some joint state 
\begin{equation} 
\rho := \sum_{x \in X} p_x \kb{x} \otimes \rho_x
\end{equation}  
where $\rho_x$ is a (quantum) encoding of Alice's bit $x$. 

Suppose Bob wants to learn some information about $x$. 
Well, in a sense, $x$ may not exist in Alice's eyes yet. 
In other words, Bob wants to learn some information about the $x$ Alice ``sees.'' 
We may assume that Alice measures her classical register in the computational basis $\{ N_x : x \in X \}$ to obtain $x$. 

Let us assume that Bob uses the measurement $\{ M_b : b \in B \}$ if he wants to learn the value of the function $f : X \to B$. 
In the context of SFE, this function is of the same form once a $y \in Y$ has been fixed.
Now, we can calculate the probability of Bob successfully learning the function $f$ as 
\begin{equation} 
\left\langle \rho, \sum_{x \in X} N_x \otimes  M_{f(x)} \right\rangle. 
\end{equation} 
Note that the structure of $\rho$ is not really all that important, only so much as to imply that we can assume $N_x$ is a basis measurement. 

Now, suppose that for a function $f_i$, for $i \in \{ 1, \ldots, n \}$, Bob has a POVM 
$\{ M_b^i : b \in B \}$ such that he learns the correct value with probability at least $1 - \varepsilon_i$. 
Then from the above expression, we can write   
\begin{equation} 
\left\langle \rho, \sum_{x \in X} N_x \otimes M^i_{f_i(x)} \right\rangle \geq 1 - \varepsilon_i. 
\end{equation} 
By defining 
\begin{equation} 
\Lambda_i = \sum_{x \in X} N_x \otimes M_{f_i(x)}^i 
\end{equation} 
we can apply Lemma~\ref{sgml} to get that 
\begin{equation} \label{seqlearn}
\ip{\rho}{\sqrt{\Lambda_n} \cdots \sqrt{\Lambda_2} \Lambda_1 \sqrt{\Lambda_2} \cdots \Lambda_n} \geq 1 - \varepsilon_1 - 2 \sum_{i=2}^n \sqrt{\varepsilon_i}.  
\end{equation} 
Now, the neat thing is that since $\{ N_x \}$ is a basis measurement, we have that 
\begin{equation} \label{neato}
\sqrt{\Lambda_n} \cdots \sqrt{\Lambda_2} \Lambda_1 \sqrt{\Lambda_2} \cdots \sqrt{\Lambda_n}
= 
\sum_{x \in X} N_x \otimes \sqrt{M^n_{f_n(x)}} \cdots \sqrt{M^2_{f_2(x)}} 
M^1_{f_1(x)}
\sqrt{M^2_{f_2(x)}} \cdots \sqrt{M^n_{f_n(x)}}. 
\end{equation} 

This suggests we define the POVM 
\begin{equation} 
\{ \tilde{M}_{b_1, \ldots, b_n} : b_1, \ldots, b_n \in B \} 
\end{equation} 
where 
\begin{equation} 
\tilde{M}_{b_1, \ldots, b_n} := 
\sqrt{M^n_{b_n}} \cdots  
\sqrt{M^2_{b_2}} M^1_{b_1} \sqrt{M^2_{b_2}} \cdots \sqrt{M^n_{b_n}}.  
\end{equation} 
One can check that this is a valid POVM and Inequality~\eqref{seqlearn}~and Equation~\eqref{neato} show that this POVM learns $f_i(x)$ for every $i \in \{ 1, \ldots, n \}$, with probability at least 
\begin{equation}
1 - \varepsilon_1 - 2 \sum_{i=2}^n \sqrt{\varepsilon_i}. 
\end{equation} 

Note that since the measurement operators have the POVM $\{ M^1_b : b \in B \}$ ``in the middle,'' and this choice was arbitrary, then we can see that Bob can create another measurement with $\{ M^i_b : b \in B \}$ 
``in the middle'' for any choice of $i$ he wants. 
Thus, if he randomly chooses which measurement is ``in the middle,'' then we see that we can average the success probability as 
\begin{equation}
\frac{1}{n} \sum_{j=1}^n \left( 1 - \varepsilon_j - 2 \sum_{i \neq j}^n \sqrt{\varepsilon_i} \right) = 1 - \frac{\sum_{i=1}^n \varepsilon_i}{n} - \frac{2(n-1)}{n} \sum_{i=1}^n \sqrt{\varepsilon_i}. 
\end{equation} 
Using Cauchy-Schwarz, one can prove that 
\begin{equation}
\sum_{i=1}^n \sqrt{\varepsilon_i} \leq \sqrt{n} \sqrt{\sum_{i=1}^n \varepsilon_i}. 
\end{equation}  
Therefore, the average success probability is bounded below by   
\begin{equation} \label{lastbound} 
1 - \frac{\sum_{i=1}^n \varepsilon_i}{n} - \frac{2(n-1)}{\sqrt n} \sqrt{\sum_{i=1}^n \varepsilon_i}. 
\end{equation} 
In the context of Lemma~\ref{lem:SGMLformal}, we have that $p_i = 1 - \varepsilon_i$ is the guessing probability of learning $f_i(x)$. 
Substituting this into \eqref{lastbound}, we finish our proof of Lemma~\ref{lem:SGMLformal}. 

   
\section{Applications} \label{sect:app}

We now present several applications of our lower bound. 


\subsection{$1$-out-of-$n$ oblivious transfer} \label{OT} 
In $1$-out-of-$n$ oblivious transfer, Alice has an input string $\left(x_1, \ldots, x_n\right) \in W^{\times n}$ (where $|W|$ is finite). Traditionally, we have $W = \{ 0, 1 \}$ so that $x$ is an $n$-bit string, however our bound works in this general setting.  
Bob has a choice input $y \in \{1, \ldots, n\}$. At the end of the protocol, Bob learns $x_y$, i.e., the $y$-th component of $x$. Ideally, Alice should not learn anything about Bob's choice $y$ and Bob should not learn anything more about Alice's input than $x_y$. In the context of SFE, $f(x,y)=x_y$. 
See Figure~\ref{fig_1not2} for a pictorial representation of this application. 

\begin{figure}[h] 
\centering
\begin{tikzpicture}[node distance=1.11cm]
\node[draw,minimum width=8cm, minimum height=2cm] (1NOT2) {$1$-out-of-$n$ OT};
\node (Alice) [above=of 1NOT2.155] {Alice};
\node (Bob) [above=of 1NOT2.22] {Bob};
\draw[->] ([yshift=1cm]1NOT2.155) --  node[fill=white] {\ssmall $\left(x_1, x_2, \ldots, x_{n-1}, x_n\right)\in W^{\times n}$} +(0pt,-1cm);
\draw[->] ([yshift=1cm]1NOT2.22) -- node[fill=white] {\ssmall $y \in \{1,2,\ldots,n-1,n\}$} +(0pt,-1cm);
\draw[->] (1NOT2.338) -- node[fill=white] {\ssmall $x_y$} +(0pt,-1cm);
\end{tikzpicture}
\caption{A pictorial representation of $1$-out-of-$n$ oblivious transfer. Alice has an input $\left(x_1, x_2, \ldots, x_{n-1}, x_n\right)\in W^{\times n}$ and Bob has an input $y \in \{1, 2, \ldots, n-1, n\}$. At the end of the protocol, Bob has learned $x_y$.} 
\label{fig_1not2}
\end{figure}
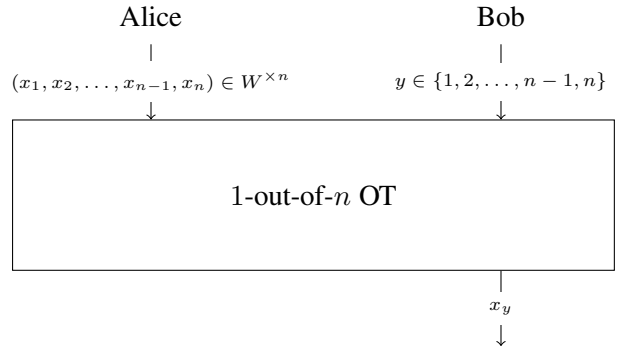

For each such protocol we define the following symbols.
\begin{center}
\begin{tabularx}{\textwidth}{rX}
  $\BOT$: & The maximum probability with which cheating Bob can guess honest Alice's input $x$. \\ 
  $\AOT$: & The maximum probability with which cheating Alice can guess honest Bob's input $y$. \\ 
\end{tabularx}
\end{center}
  
We can apply the lower bound from Inequality~\eqref{lb} if we scope the three values $n = \left|Y\right|$, $\ASFE = \AOT$, and $\BSFEP = \BOT$, to obtain  
\begin{equation}\label{lbOT}
\BOT \geq \frac{1}{n\AOT}-2(n-1)\sqrt{1-\frac{1}{n\AOT}}.\end{equation} 

For this task, we have 
\begin{equation} 
\BSFER = \frac{1}{|W|^{n-1}} 
\end{equation} 
since Bob can learn one $x_y$ and must randomly guess the other $n-1$ values.

Now we revisit the constants $c_A$ and $c_B$. We make the assumptions following Inequality~\eqref{ASFEc} and Equation~\eqref{BSFEc} that
\begin{equation}
\BOT = c_B \cdot \BSFER=\frac{c_B}{\left|W\right|^{n-1}} \quad \text{and} \quad \AOT \leq \frac{c_A}{\left|Y\right|}=\frac{c_A}{n},\end{equation}
which, using Inequality~\eqref{lbOT}, provides us with a lower bound reminiscent of Inequality~\eqref{lb3}, i.e.,
\begin{equation}c_B \geq \left|W\right|^{n-1} \left(\frac{1}{c_A}-2(n-1)\sqrt{1-\frac{1}{c_A}}\right).
\end{equation}
Figure \ref{1nOTc} illustrates the tradeoff between values of $c_B$ and $c_A$ for $\left|W\right|=2$ and varying values of $n$. Additionally, Figure \ref{1nOTc3} illustrates the tradeoff between values of $c_B$ and $c_A$ for $\left|W\right|=3$ and varying values of $n$.

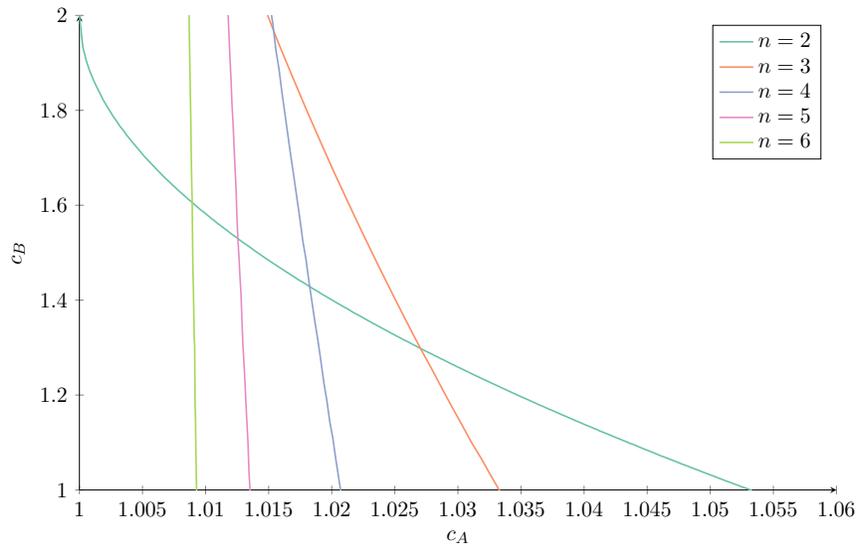
\begin{figure}[hbt!]
\centering
\scalebox{0.75}{%
\begin{tikzpicture}
\begin{axis}[ymin=1, ymax=2, xmin=1, xmax=1.06, axis lines=left, xlabel = {$c_A$}, ylabel={$c_B$}, width=15cm, height=10cm, x tick label style={/pgf/number format/.cd, precision=4, /tikz/.cd}]
\addplot+ [mark=none, thick, domain=1:1.0533, samples=200] {2^(2-1)*(1/(x)-2*(2-1)*sqrt(1-1/(x)))};
\addlegendentry{$n=2$}
\addplot+ [mark=none, thick, domain=1.0148:1.0334] {2^(3-1)*(1/(x)-2*(3-1)*sqrt(1-1/(x)))};
\addlegendentry{$n=3$}
\addplot+ [mark=none, thick, domain=1.0151:1.0208] {2^(4-1)*(1/(x)-2*(4-1)*sqrt(1-1/(x)))};
\addlegendentry{$n=4$}
\addplot+ [mark=none, thick, domain=1.0117:1.0136] {2^(5-1)*(1/(x)-2*(5-1)*sqrt(1-1/(x)))};
\addlegendentry{$n=5$}
\addplot+ [mark=none, thick, domain=1.0086:1.0094] {2^(6-1)*(1/(x)-2*(6-1)*sqrt(1-1/(x)))};
\addlegendentry{$n=6$} 
\end{axis}
\end{tikzpicture}%
}
\caption{$c_A$ vs. $c_B$ for $1$-out-of-$n$ oblivious transfer, $\left|W\right|=2$, and varying $n$.} 
\label{1nOTc}
\end{figure} 

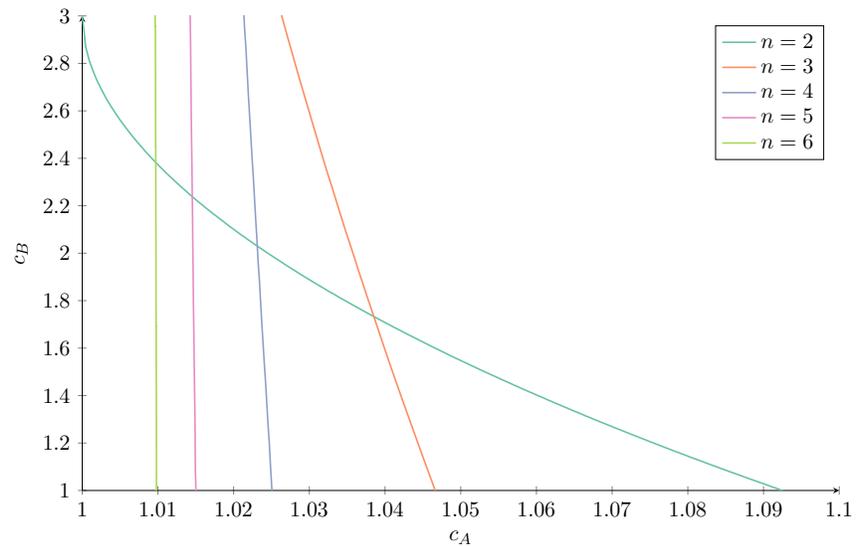
\begin{figure}[hbt!]
\centering
\scalebox{0.75}{%
\begin{tikzpicture}
\begin{axis}[ymin=1, ymax=3, xmin=1, xmax=1.1, axis lines=left, xlabel = {$c_A$}, ylabel={$c_B$}, width=15cm, height=10cm, x tick label style={/pgf/number format/.cd, precision=4, /tikz/.cd}]
\addplot+ [mark=none, thick, domain=1:1.0926, samples=200] {3^(2-1)*(1/(x)-2*(2-1)*sqrt(1-1/(x)))};
\addlegendentry{$n=2$}
\addplot+ [mark=none, thick, domain=1.0263:1.0467] {3^(3-1)*(1/(x)-2*(3-1)*sqrt(1-1/(x)))};
\addlegendentry{$n=3$}
\addplot+ [mark=none, thick, domain=1.0213:1.0252] {3^(4-1)*(1/(x)-2*(4-1)*sqrt(1-1/(x)))};
\addlegendentry{$n=4$}
\addplot+ [mark=none, thick, domain=1.0142:1.0151] {3^(5-1)*(1/(x)-2*(5-1)*sqrt(1-1/(x)))};
\addlegendentry{$n=5$}
\addplot+ [mark=none, thick, domain=1.0096:1.0099] {3^(6-1)*(1/(x)-2*(6-1)*sqrt(1-1/(x)))};
\addlegendentry{$n=6$}
\end{axis}
\end{tikzpicture}%
}
\caption{$c_A$ vs. $c_B$ for $1$-out-of-$n$ oblivious transfer, $\left|W\right|=3$, and varying $n$.}
\label{1nOTc3}
\end{figure}

Using the rationale behind Equation~\eqref{exact}, we can introduce the constant $c$ such that
\begin{equation}\label{1nOTexact}c = \left|W\right|^{n-1} \left(\frac{1}{c}-2(n-1)\sqrt{1-\frac{1}{c}}\right).\end{equation}
 

\subsubsection{Special cases} 
We now look at a few special cases to see how the numbers behave. 

\paragraph{$1$-out-of-$2$ (bit) OT.} 
When $|W| = 2$, $n = 2$, we can calculate $c$ from~\eqref{1nOTexact} to get $c \approx 1.0484$. 
This implies  
\begin{equation} 
\BOT \gtrapprox 0.5242 > 0.5000
\quad \text{ or } \quad 
\AOT \gtrapprox 0.5242 > 0.5000.
\end{equation} 
  

\paragraph{$1$-out-of-$3$ (bit) OT.} 
When $|W| = 2$, $n = 3$, we can calculate $c$ from~\eqref{1nOTexact} to get $c \approx 1.0326$. 
This implies  
\begin{equation} 
\BOT \gtrapprox 0.2581 > 0.2500 
\quad \text{ or } \quad 
\AOT \gtrapprox 0.3442 > 0.3333.
\end{equation} 


\paragraph{$1$-out-of-$2$ (trit) OT.} 
When $|W| = 3$, $n = 2$, we can calculate $c$ from~\eqref{1nOTexact} to get $c \approx 1.085$.  
This implies that 
\begin{equation} 
\BOT \gtrapprox 0.3617 > 0.3333
\quad \text{ or } \quad 
\AOT \gtrapprox 0.5425 > 0.5000.
\end{equation} 

\paragraph{A bit of history.} 

In terms of cheating probabilities, there are a few bounds which have been concluded for oblivious transfer. 
In~\cite{CKS13}, it was shown for the case of $|W| = 2$, $n=2$ that $\max\{ \BOT, \AOT \} \gtrapprox 0.5852$ using a lower bound on bit commitment. 
The lower bound constant was later improved to $2/3$ in~\cite{CGS16}. 
In~\cite{GRS18}, it was shown that for $|W| = 2^m$, for any $m$, and $n=2$, we have $\max\{ \BOT, \AOT \} \gtrapprox 0.61$ (which is independent of $m$).  
The numbers we presented above do not improve upon the known lower bound for $|W| = 2$ and we suspect will not improve upon the known lower bound for $|W| = 2^m$ for any $m$. 
However, the bounds in those papers are specific to both oblivious transfer and to the case of $n=2$, whereas our bound is much more general. 
As far as we are aware, the case of $n > 2$ has not been explored (at least in this context) so any bound (such as the one above) is new.   
 

\subsection{$k$-out-of-$n$ oblivious transfer} \label{KNOT}
   
In $k$-out-of-$n$ oblivious transfer, Alice has an input string $\left(x_1, \ldots, x_n\right) \in W^{\times n}$. 
Bob has a choice input $y \in Y = \{ S : S \subset \{ 1, \ldots, n \}, |S| = k \}$, for $k < n$. 
At the end of the protocol, Bob learns the components of $x$ corresponding to the set $y$, i.e., $\{ x_i : i \in y \}$. 
Ideally, Alice should not learn anything about Bob's choice of proper subset $y$ and Bob should not learn anything more about Alice's input than $\{x_i : i \in y\}$. 
In the context of SFE, $f(x,y)=\{ x_i : i \in y \}$. See Figure~\ref{fig_knot2} for a pictorial representation of this application. 

\begin{figure}[h]
\centering
\begin{tikzpicture}[node distance=1.11cm]
\node[draw,minimum width=8cm, minimum height=2cm] (KNOT2) {${n \choose k}$ OT};
\node (Alice) [above=of KNOT2.155] {Alice};
\node (Bob) [above=of KNOT2.22] {Bob};
\draw[->] ([yshift=1cm]KNOT2.155) --  node[fill=white] {\ssmall $\left(x_1, x_2, \ldots, x_{n-1}, x_n\right)\in W^{\times n}$} +(0pt,-1cm);
\draw[->] ([yshift=1cm]KNOT2.22) -- node[fill=white] {\ssmall $y \subset \{1,2,\ldots,n-1,n\}$} +(0pt,-1cm);
\draw[->] (KNOT2.338) -- node[fill=white] {\ssmall $\{x_i : i \in y\}$} +(0pt,-1cm);
\end{tikzpicture}
\caption{A pictorial representation of $k$-out-of-$n$ oblivious transfer. 
Alice has an input $\left(x_1, x_2, \ldots, x_{n-1}, x_n\right)\in W^{\times n}$ and Bob has an input $y \subset \{1, 2, \ldots, n-1, n\}$ of size $k$. 
At the end of the protocol, Bob has learned $\{x_i : i \in y\}$.} 
\label{fig_knot2}
\end{figure}
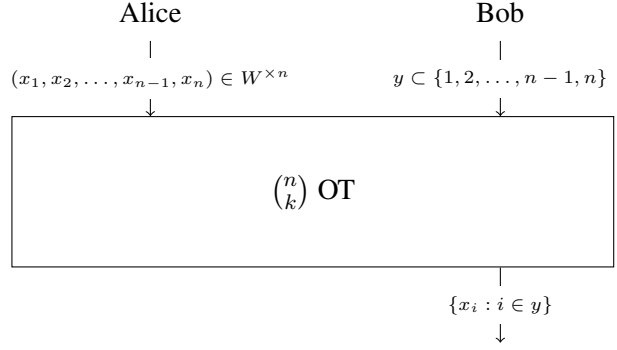
  
For each such protocol we define the following symbols.
\begin{center}
\begin{tabularx}{\textwidth}{rX}
  $\BKNOT$: & The maximum probability with which cheating Bob can guess honest Alice's input $x$. \\
  $\AKNOT$: & The maximum probability with which cheating Alice can guess honest Bob's input $y$. \\ 
\end{tabularx}
\end{center}
 
We can apply the lower bound from Inequality~\eqref{lb} noting that $\ASFE = \AKNOT$ and $\BSFEP = \BKNOT$\footnote{It is worth noting here that Bob seems to be \emph{over learning} the value of $x$ in SFE.  
For instance, when $k = 2$ and $n = 3$, $\BSFEP$ wants him to learn \emph{every} subset pair, i.e., $\{ (x_1, x_2), (x_2, x_3), (x_1, x_3) \}$, from which $x$ can be inferred. 
On the other hand, learning $x$ implies the knowledge about every subset, so the probabilities are the same.}.  
We can now apply the lower bound, and we get the following inequality 
\begin{equation} \label{lbKNOT}
\BKNOT \geq \frac{1}{{n \choose k} \AKNOT}-2\left({n \choose k}-1\right)\sqrt{1-\frac{1}{{n \choose k} \AKNOT}}. 
\end{equation}  
  
For this task, we have 
\begin{equation} 
\BSFER = \frac{1}{|W|^{n-k}} 
\end{equation} 
since Bob can learn $k$ values of Alice's input and must randomly guess the other $n-k$ values.

Now we revisit the constants $c_A$ and $c_B$. We make the assumptions following Inequality~\eqref{ASFEc} and Equation~\eqref{BSFEc} that
\begin{equation}
\BKNOT = c_B \cdot \BSFER=\frac{c_B}{\left|W\right|^{n-k}} \quad \text{and} \quad \AKNOT \leq \frac{c_A}{\left|Y\right|}=\frac{c_A}{{n \choose k}},\end{equation}
which, using Inequality~\eqref{lbKNOT}, provides us with a lower bound reminiscent of Inequality~\eqref{lb3}, i.e.,
\begin{equation}c_B \geq \left|W\right|^{n-k} \left(\frac{1}{c_A}-2\left({n \choose k}-1\right)\sqrt{1-\frac{1}{c_A}}\right).
\end{equation}
Figure \ref{knOTc} illustrates the tradeoff between values of $c_B$ and $c_A$ for $\left|W\right|=2$, $k=2$, and varying values of $n$. Additionally, Figure \ref{knOTcnhalf} illustrates the tradeoff between $c_B$ and $c_A$ for $\left|W\right|=2$, $k=\frac{n}{2}$, and varying values of $n$.

\begin{figure}[hbt!]
\centering
\scalebox{0.75}{%
\begin{tikzpicture}
\begin{axis}[ymin=1, ymax=2, xmin=1, xmax=1.016, axis lines=left, xlabel = {$c_A$}, ylabel={$c_B$}, width=15cm, height=10cm, x tick label style={/pgf/number format/.cd, precision=4, /tikz/.cd}]
\addplot+ [mark=none, thick, domain=1:1.015, samples=200] {2^(3-2)*(1/(x)-2*(3-1)*sqrt(1-1/(x)))};
\addlegendentry{$n=3$}
\addplot+ [mark=none, thick, domain=1.0024:1.0057] {2^(4-2)*(1/(x)-2*(6-1)*sqrt(1-1/(x)))};
\addlegendentry{$n=4$}
\addplot+ [mark=none, thick, domain=1.0017:1.0025] {2^(5-2)*(1/(x)-2*(10-1)*sqrt(1-1/(x)))};
\addlegendentry{$n=5$}
\addplot+ [mark=none, thick, domain=1:1.0012] {2^(6-2)*(1/(x)-2*(15-1)*sqrt(1-1/(x)))};
\addlegendentry{$n=6$} 
\end{axis}
\end{tikzpicture}%
}
\caption{$c_A$ vs. $c_B$ for $k$-out-of-$n$ oblivious transfer, $\left|W\right|=2$, $k=2$, and varying $n$.}
\label{knOTc}
\end{figure}

\begin{figure}[hbt!]
\centering
\scalebox{0.75}{%
\begin{tikzpicture}
\begin{axis}[ymin=1, ymax=2, xmin=1, xmax=1.06, axis lines=left, xlabel = {$c_A$}, ylabel={$c_B$}, width=15cm, height=10cm, x tick label style={/pgf/number format/.cd, precision=4, /tikz/.cd}]
\addplot+ [mark=none, thick, domain=1:1.0532, samples=200] {2^(2-1)*(1/(x)-2*(2-1)*sqrt(1-1/(x)))};
\addlegendentry{$n=2$, $k=1$}
\addplot+ [mark=none, thick, domain=1.0024:1.0056] {2^(4-2)*(1/(x)-2*(6-1)*sqrt(1-1/(x)))};
\addlegendentry{$n=4$, $k=2$}
\addplot+ [mark=none, thick, domain=1.0003:1.0006] {2^(6-3)*(1/(x)-2*(20-1)*sqrt(1-1/(x)))};
\addlegendentry{$n=6$, $k=3$} 
\end{axis}
\end{tikzpicture}%
}
\caption{$c_A$ vs. $c_B$ for $k$-out-of-$n$ oblivious transfer, $\left|W\right|=2$, $k=\frac{n}{2}$, and varying $n$.}
\label{knOTcnhalf}
\end{figure}

Using the rationale behind Equation~\eqref{exact}, we can introduce the constant $c$ such that
\begin{equation}\label{knOTexact}c = \left|W\right|^{n-k} \left(\frac{1}{c}-2\left({n\choose k}-1\right)\sqrt{1-\frac{1}{c}}\right).\end{equation}
  

\newpage 
\subsubsection{Special cases} 
We now look at two special cases to see how the numbers behave. 

 
\paragraph{$2$-out-of-$3$ (bit) OT.} 
When $|W| = 2$, $n = 3$, and $k=2$, we can calculate $c$ from~\eqref{knOTexact} to get $c \approx 1.0145$.  
This implies that 
\begin{equation} 
\BKNOT \gtrapprox 0.5073 > 0.5000 
\quad \text{ or } \quad 
\AKNOT \gtrapprox 0.3382 > 0.3333.
\end{equation} 

 
\paragraph{$2$-out-of-$4$ (bit) OT.} 
When $|W| = 2$, $n = 4$, and $k = 2$, we can calculate $c$ from~\eqref{knOTexact} to get $c \approx 1.0056$.  
This implies that 
\begin{equation} 
\BKNOT \gtrapprox 0.2514 > 0.2500 
\quad \text{ or } \quad 
\AKNOT \gtrapprox 0.1676 > 0.1667.
\end{equation} 


\paragraph{$3$-out-of-$4$ (bit) OT.} 
When $|W| = 2$, $n = 4$, and $k=3$, we can calculate $c$ from~\eqref{knOTexact} to get $c \approx 1.0067$.  
This implies that 
\begin{equation} 
\BKNOT \gtrapprox 0.5034 > 0.5000
\quad \text{ or } \quad 
\AKNOT \gtrapprox 0.2517 > 0.2500.
\end{equation} 


\paragraph{A bit of history.} 

As far as we are aware, this primitive has not been studied before in the quantum literature in this context so the bounds we present here on the cheating probabilities are the first of their kind. 
We also note that Bob's cheating probability we give here is likely very much improvable if one were to specialize our proof to this specific class of protocols. 
However, we leave this open to future work. 


\subsection{XOR oblivious transfer} \label{XOT} 
In XOR oblivious transfer, Alice has two $n$-bit string inputs $x_1 \in \{0,1\}^n$ and $x_2 \in \{0,1\}^n$. 
Bob has a choice input $y \in \{1, 2, \oplus \}$. 
At the end of the protocol, Bob learns $x_1$ if $y=1$, $x_2$ if $y=2$, or $x_1 \oplus x_2$ if $y=\oplus$. Ideally, Alice should not learn anything about Bob's choice $y$ and Bob should not learn anything more about Alice's inputs than his output. 
In the context of SFE, 

\[f(x,y)=
\begin{cases}
x_y & \text{if }y\in\{1,2\},\\
x_1\oplus x_2 & \text{if }y = \oplus. 
\end{cases}\]
See Figure~\ref{fig_xot} for a pictorial representation of this application. 

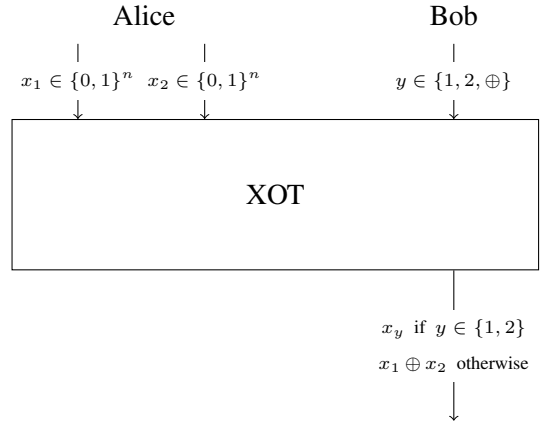
\begin{figure}[h]
\centering
\begin{tikzpicture}[node distance=1.11cm]
\node[draw,minimum width=7cm, minimum height=2cm] (XOT) {XOT};
\node (Alice) [above=of XOT.150] {Alice};
\node (Bob) [above=of XOT.23] {Bob};
\draw[->] ([yshift=1cm]XOT.159) --  node[fill=white] {\ssmall $x_1 \in \{0,1\}^n$} +(0pt,-1cm);
\draw[->] ([yshift=1cm]XOT.133) --  node[fill=white] {\ssmall $x_2 \in \{0,1\}^n$} +(0pt,-1cm);
\draw[->] ([yshift=1cm]XOT.23) -- node[fill=white] {\ssmall $y \in \{1,2, \oplus\}$} +(0pt,-1cm);
\draw[->] (XOT.337) -- node[fill=white, text width =2cm, align=center] {\ssmall $x_y$ if $y\in \{1,2\}$ $x_1 \oplus x_2 \text{ otherwise}$} +(0pt,-2cm);
\end{tikzpicture}
\caption{A pictorial representation of XOR oblivious transfer. Alice has two $n$-bit input strings $x_1, x_2 \in \{0, 1\}^n$ and Bob has a choice value $y \in \{1, 2, \oplus\}$. At the end of the protocol, Bob has learned $x_y$ if $y\in \{1,2\}$, or $x_1 \oplus x_2$ otherwise.} 
\label{fig_xot}
\end{figure}

For each such protocol we define the following symbols.
\begin{center}
\begin{tabularx}{\textwidth}{rX}
  $\BXOT$: & The maximum probability with which cheating Bob can guess honest Alice's inputs $x_1$ and $x_2$. \\
  $\AXOT$: & The maximum probability with which cheating Alice can guess honest Bob's input $y$. \\ 
\end{tabularx}
\end{center} 

We can apply the lower bound from Inequality~\eqref{lb} if we scope the three values $\left|Y\right| = 3$, $\ASFE = \AXOT$, and $\BSFEP = \BOT$. 
We can now apply the lower bound, and we get the following inequality 
\begin{equation}\label{lbXOT}
\BXOT \geq \frac{1}{3 \AXOT} - 4\sqrt{1-\frac{1}{3 \AXOT}}.  
\end{equation} 
   
For this task, we have 
\begin{equation} 
\BSFER = \frac{1}{2^n} 
\end{equation} 
since Bob can learn one of Alice's $n$-bit inputs entirely, and must randomly guess the other.

Now we revisit the constants $c_A$ and $c_B$. We make the assumptions following Inequality~\eqref{ASFEc} and Equation~\eqref{BSFEc} that
\begin{equation}
\BXOT = c_B \cdot \BSFER=\frac{c_B}{2^n} \quad \text{and} \quad \AXOT \leq \frac{c_A}{\left|Y\right|}=\frac{c_A}{3},\end{equation}
which, using Inequality~\eqref{lbXOT}, provides us with a lower bound reminiscent of Inequality~\eqref{lb3}, i.e.,
\begin{equation}c_B \geq 2^n \left(\frac{1}{c_A}-4\sqrt{1-\frac{1}{c_A}}\right).
\end{equation}
Figure \ref{XOTc} illustrates the tradeoff between values of $c_B$ and $c_A$ for varying values of $n$.

\begin{figure}[hbt!] 
\centering
\scalebox{0.75}{%
\begin{tikzpicture}
\begin{axis}[ymin=1, ymax=2, xmin=1, xmax=1.07, axis lines=left, xlabel = {$c_A$}, ylabel={$c_B$}, width=15cm, height=10cm, x tick label style={/pgf/number format/.cd, precision=4, /tikz/.cd}]
\addplot+ [mark=none, thick, domain=1:1.015, samples=200] {2^(1)*(1/(x)-4*sqrt(1-1/(x)))};
\addlegendentry{$n=1$}
\addplot+ [mark=none, thick, domain=1.0149:1.0334] {2^(2)*(1/(x)-4*sqrt(1-1/(x)))};
\addlegendentry{$n=2$}
\addplot+ [mark=none, thick, domain=1.0333:1.0453] {2^(3)*(1/(x)-4*sqrt(1-1/(x)))};
\addlegendentry{$n=3$}
\addplot+ [mark=none, thick, domain=1.0452:1.052] {2^(4)*(1/(x)-4*sqrt(1-1/(x)))};
\addlegendentry{$n=4$}
\addplot+ [mark=none, thick, domain=1.0518:1.0555] {2^(5)*(1/(x)-4*sqrt(1-1/(x)))};
\addlegendentry{$n=5$} 
\end{axis}
\end{tikzpicture}%
}
\caption{$c_A$ vs. $c_B$ for XOR oblivious transfer and varying $n$.}
\label{XOTc}
\end{figure}
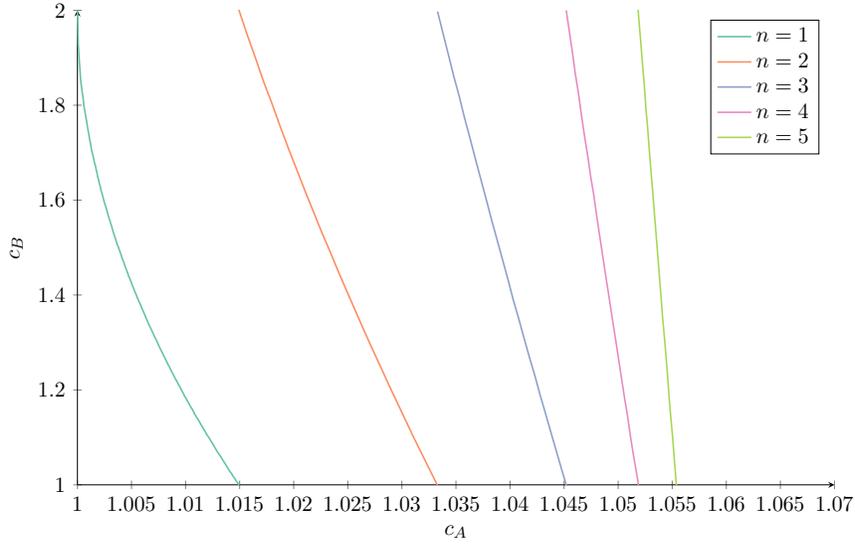

Using the rationale behind Equation~\eqref{exact}, we can introduce the constant $c$ such that
\begin{equation}\label{XOTexact}c = 2^n \left(\frac{1}{c}-4\sqrt{1-\frac{1}{c}}\right).\end{equation}
   

\subsubsection{Special cases} 
We now look at some special cases to see how the numbers behave. 

 
\paragraph{1-bit XOR oblivious transfer.} 
When $n=1$, we can calculate $c$ from~\eqref{XOTexact} to get $c \approx 1.0145$.  
This implies that 
\begin{equation} 
\BXOT \gtrapprox 0.5073 > 0.5000
\quad \text{ or } \quad 
\AXOT \gtrapprox 0.3382 > 0.3333.
\end{equation}

\paragraph{\indent Remark:} Coincidentally, this is the same trade-off as in the case of $2$-out-of-$3$ bit oblivious transfer. 

 
\paragraph{2-bit XOR oblivious transfer.} 
When $n=2$, we can calculate $c$ from~\eqref{XOTexact} to get $c \approx 1.0326$.  
This implies that 
\begin{equation} 
\BXOT \gtrapprox 0.2582 > 0.2500
\quad \text{ or } \quad 
\AXOT \gtrapprox 0.3442 > 0.3333.
\end{equation}


\paragraph{A bit of history.} 

XOR oblivious transfer is less studied than its $1$-out-of-$n$ sibling. 
The only reference in the quantum literature of which we are aware is \cite{KST20} 
where they present a device-independent protocol achieving $\max \{\BXOT, \AXOT \} < 1$ when $n = 2$. 
The bounds above exhibit the first lower bounds on this task.  
   

\subsection{Equality/{one-way oblivious identification}} \label{EQ}

The equality task is a special case of SFE where $f$ is simply the equality function $f(x,y) = \delta_{x,y}$, i.e., the Kronecker delta function. 
More specifically, Alice has an input $x \in X$, and Bob has an input $y \in Y$, where $X = Y$ here. 
Ideally, Bob should not learn anything more about Alice's input than what comes naturally from the output of the equality function.
The output, denoted $\delta_{xy}$, is a value that indicates to Bob whether their inputs are equal.
See Figure~\ref{fig_eq} for a pictorial representation of this application. 

\begin{figure}[h]
\centering
\begin{tikzpicture}[node distance=1.11cm]
\node[draw,minimum width=4cm, minimum height=2cm] (EQ) {EQ};
\node (Alice) [above=of EQ.145] {Alice};
\node (Bob) [above=of EQ.31] {Bob};
\draw[->] ([yshift=1cm]EQ.147) --  node[fill=white] {\ssmall $x \in X$} +(0pt,-1cm);
\draw[->] ([yshift=1cm]EQ.33) -- node[fill=white] {\ssmall $y \in Y$} +(0pt,-1cm); 
\draw[->] (EQ.328) -- node[fill=white] {\ssmall $\delta_{xy}$} +(0pt,-1cm);
\end{tikzpicture}
\caption{A pictorial representation of the equality function. Alice has an input $x \in X$ and Bob has an input $y \in Y$. 
At the end of the protocol, Bob outputs $\delta_{xy}$, the value of which tells Bob whether their inputs are equal.} 
\label{fig_eq}
\end{figure}
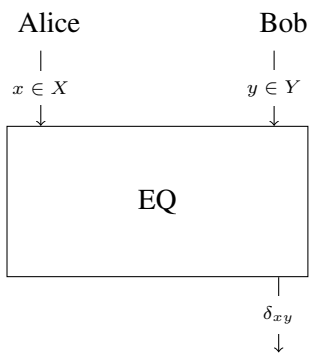 

This task is sometimes referred to as \emph{one-way oblivious identification}. 
You could imagine Bob wants to verify the identity of Alice by seeing if she knows a password.
If the passwords match, then Bob is assured that the other party is Alice. 
An imposter (i.e. a cheating party) would like to use such a protocol to learn Alice's password. 

For each such protocol we define the following symbols.
\begin{center}
\begin{tabularx}{\textwidth}{rX}
  $\BEQ$: & The maximum probability with which cheating Bob can guess honest Alice's input $x$. \\
  $\AEQ$: & The maximum probability with which cheating Alice can guess honest Bob's input $y$. \\ 
\end{tabularx}
\end{center}
 
We can apply the lower bound from Inequality~\eqref{lb} noting that $\ASFE = \AEQ$ and $\BSFEP = \BEQ$. 
Suppose $n = |X| = |Y|$ for clarity. 
We can now apply the lower bound, and we get the following inequality   
\begin{equation} \label{lbEQ}
\BEQ \geq \frac{1}{n \AEQ}-2\left(n-1\right)\sqrt{1-\frac{1}{n \AEQ}}.  
\end{equation}  

For this task, we have 
\begin{equation} 
\BSFER = \frac{2}{n}.
\end{equation} 
Bob's output function $\delta_{xy}$ will return 1 if Alice and Bob's inputs match, with probability $\frac{1}{n}$. Bob's output function $\delta_{xy}$ will return 0 if Alice and Bob's inputs don't match, with probability $\frac{n-1}{n}$, and Bob's random guessing probability is then $\frac{1}{n-1}$, since he already knows his input to the protocol does not match Alice's. Combining the two probabilities gives Bob's random guessing probability of $\frac{1}{n}+\frac{n-1}{n}\cdot\frac{1}{n-1}$ = $\frac{2}{n}$.

Now we revisit the constants $c_A$ and $c_B$. We make the assumptions following Inequality~\eqref{ASFEc} and Equation~\eqref{BSFEc} that
\begin{equation}
\BEQ = c_B \cdot \BSFER=\frac{2c_B}{n} \quad \text{and} \quad \AEQ \leq \frac{c_A}{\left|Y\right|}=\frac{c_A}{n},\end{equation}
which, using Inequality~\eqref{lbEQ}, provides us with a lower bound reminiscent of Inequality~\eqref{lb3}, i.e.,
\begin{equation}c_B \geq \frac{n}{2} \left(\frac{1}{c_A}-2\left(n-1\right)\sqrt{1-\frac{1}{c_A}}\right).
\end{equation}

Note that when $n=2$, $\BEQ$ is necessarily $1$. If $\delta_{xy}$ returns 0, meaning Bob and Alice's inputs are not equal, Bob still knows which input Alice has, since there is only one other element in $X$.

Figure \ref{EQc} illustrates the tradeoff between values of $c_B$ and $c_A$ for varying values of $n$.

\begin{figure}[h]
\centering
\scalebox{0.75}{%
\begin{tikzpicture}
\begin{axis}[ymin=1, ymax=2, xmin=1, xmax=1.008, axis lines=left, xlabel = {$c_A$}, ylabel={$c_B$}, width=15cm, height=10cm, x tick label style={/pgf/number format/.cd, precision=4, /tikz/.cd}] 
\addplot+ [mark=none, thick, domain=1:1.059, samples=200] {(3*(1/(x)-2*(3-1)*sqrt(1-1/(x))))/2};
\addlegendentry{$n=3$}
\addplot+ [mark=none, thick, domain=1:1.027, samples=100] {(4*(1/(x)-2*(4-1)*sqrt(1-1/(x))))/2};
\addlegendentry{$n=4$}
\addplot+ [mark=none, thick, domain=1:1.0154, samples=100] {(5*(1/(x)-2*(5-1)*sqrt(1-1/(x))))/2};
\addlegendentry{$n=5$}
\addplot+ [mark=none, thick, domain=1:1.0099, samples=100] {(6*(1/(x)-2*(6-1)*sqrt(1-1/(x))))/2};
\addlegendentry{$n=6$} 
\end{axis}
\end{tikzpicture}%
}
\caption{$c_A$ vs. $c_B$ for equality and varying $n$.}
\label{EQc}
\end{figure}
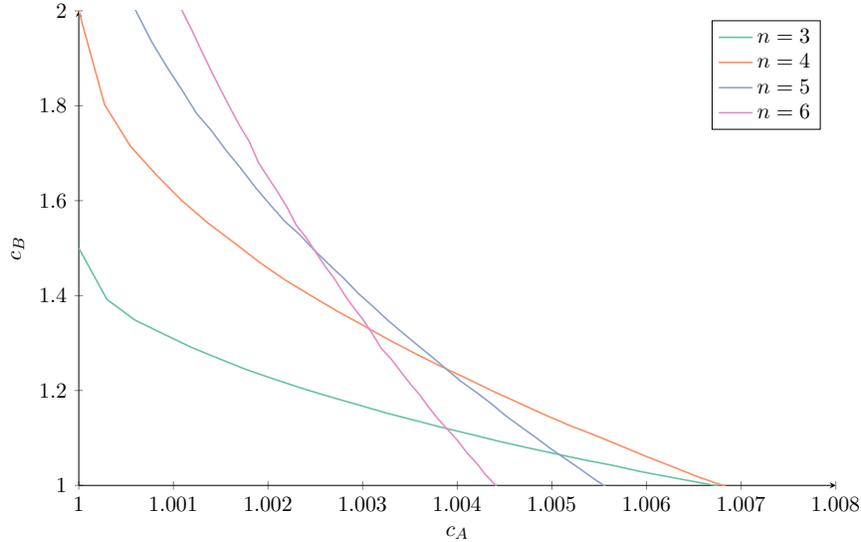

Using the rationale behind Equation~\eqref{exact}, we can introduce the constant $c$ such that
\begin{equation}\label{EQexact}c = \frac{n}{2} \left(\frac{1}{c}-2\left(n-1\right)\sqrt{1-\frac{1}{c}}\right).\end{equation}
   

\subsubsection{Special case} 
We now look at a special case to see how the numbers behave. 

 
\paragraph{Equality function for $n=3$.} 
When $n=3$, we can calculate $c$ from~\eqref{EQexact} to get $c \approx 1.0065$.  
This implies that 
\begin{equation} 
\BEQ \gtrapprox 0.671 > 0.667
\quad \text{ or } \quad 
\AEQ \gtrapprox 0.3355 > 0.3333.
\end{equation}


These are the only lower bounds on the cheating probabilities for equality of which we are aware. 


\subsection{Inner product} \label{IP}

The inner product function is a special case of SFE where $X = \{ 0, 1 \}^n$, $Y = \{ 0, 1 \}^n \setminus \{ 0 \}^n$, and $f$ is simply the inner product function 
\begin{equation}
f(x,y) = x \cdot y := \sum_{i=1}^n x_i y_i \mod 2. 
\end{equation}  
Ideally, Bob should not learn anything more about Alice's inputs than what comes naturally from the output of the inner product function. 
Note that when $n = 2$ here, this is the same as XOR oblivious transfer when the length of each of Alice's input bit strings is $1$. 
They are only the same in this smallest case though. See Figure~\ref{fig_ip} for a pictorial representation of this application. 

\begin{figure}[h]
\centering
\begin{tikzpicture}[node distance=1.11cm]
\node[draw,minimum width=6cm, minimum height=2cm] (IP) {IP};
\node (Alice) [above=of IP.156] {Alice};
\node (Bob) [above=of IP.33] {Bob};
\draw[->] ([yshift=1cm]IP.156) --  node[fill=white] {\ssmall $x \in \{0,1\}^n$} +(0pt,-1cm);
\draw[->] ([yshift=1cm]IP.33) -- node[fill=white] {\ssmall $y \in \{0,1\}^n \setminus \{0\}^n$} +(0pt,-1cm);
\draw[->] (IP.327) -- node[fill=white] {\ssmall $x \cdot y = \sum_{i=1}^n x_i y_i\text{ mod } 2$} +(0pt,-1cm);
\end{tikzpicture}
\caption{A pictorial representation of inner product. Alice has an $n$-bit string $x \in \{0, 1\}^n$ and Bob has an $n$-bit string choice $y \in \{0, 1\}^n \setminus \{0\}^n$. At the end of the protocol, Bob has learned $x \cdot y = \sum_{i=1}^n x_i y_i\text{ mod } 2$.} 
\label{fig_ip}
\end{figure}
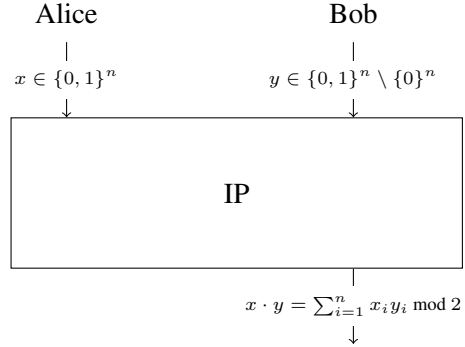

For each such protocol we define the following symbols.
\begin{center}
\begin{tabularx}{\textwidth}{rX}
  $\BIP$: & The maximum probability with which cheating Bob can guess honest Alice's input $x$. \\
  $\AIP$: & The maximum probability with which cheating Alice can guess honest Bob's input $y$. \\ 
\end{tabularx}
\end{center}
 
We can apply the lower bound from Inequality~\eqref{lb} noting that $\ASFE = \AIP$ and $\BSFEP = \BIP$. 
We can now apply the lower bound, and we get the following inequality    
\begin{equation} \label{lbIP}
\BIP \geq \frac{1}{(2^n -1) \AIP} - 2 (2^n -2) \sqrt{1-\frac{1}{(2^n -1) \AIP}}.  
\end{equation} 

For this task, we have 
\begin{equation} 
\BSFER = \frac{2}{2^n}.
\end{equation} 
This is because $|\{ x \in \{ 0, 1 \}^n : x \cdot y = c \}| = 2^{n-1}$ for any $c \in \{ 0, 1 \}$ and nonzero $y$. 

Now we revisit the constants $c_A$ and $c_B$. We make the assumptions following Inequality~\eqref{ASFEc} and Equation~\eqref{BSFEc} that
\begin{equation}
\BIP = c_B \cdot \BSFER=\frac{2c_B}{2^n} \quad \text{and} \quad \AIP \leq \frac{c_A}{\left|Y\right|}=\frac{c_A}{2^n - 1},\end{equation}
which, using Inequality~\eqref{lbEQ}, provides us with a lower bound reminiscent of Inequality~\eqref{lb3}, i.e.,
\begin{equation}c_B \geq \frac{2^n}{2} \left(\frac{1}{c_A}-2\left(2^n-2\right)\sqrt{1-\frac{1}{c_A}}\right).
\end{equation}
Figure \ref{IPc} illustrates the tradeoff between values of $c_B$ and $c_A$ for varying values of $n$. 

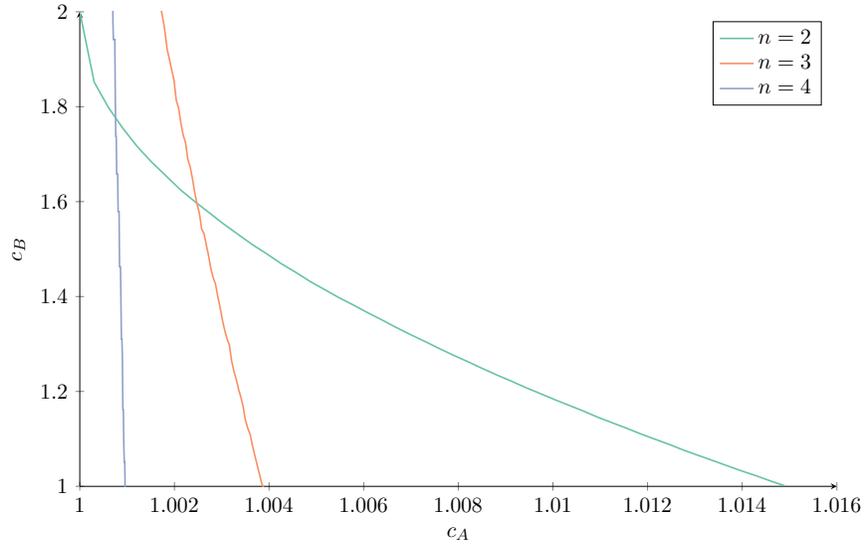
\begin{figure}[h]
\centering
\scalebox{0.75}{%
\begin{tikzpicture}
\begin{axis}[ymin=1, ymax=2, xmin=1, xmax=1.016, axis lines=left, xlabel = {$c_A$}, ylabel={$c_B$}, width=15cm, height=10cm, x tick label style={/pgf/number format/.cd, precision=4, /tikz/.cd}]
\addplot+ [mark=none, thick, domain=1:1.0149, samples=50] {((2^(2))/2)*(1/(x)-2*(2^2-2)*sqrt(1-1/(x)))};
\addlegendentry{$n=2$}
\addplot+ [mark=none, thick, domain=1.0016:1.004, samples=50] {((2^(3))/2)*(1/(x)-2*(2^3-2)*sqrt(1-1/(x)))};
\addlegendentry{$n=3$}
\addplot+ [mark=none, thick, domain=1.0006:1.0011, samples=50] {((2^(4))/2)*(1/(x)-2*(2^4-2)*sqrt(1-1/(x)))};
\addlegendentry{$n=4$}
\end{axis}
\end{tikzpicture}%
}
\caption{$c_A$ vs. $c_B$ for inner product and varying $n$.}
\label{IPc}
\end{figure}

Using the rationale behind Equation~\eqref{exact}, we can introduce the constant $c$ such that
\begin{equation}\label{IPexact}c = \frac{2^n}{2} \left(\frac{1}{c}-2\left(2^n-2\right)\sqrt{1-\frac{1}{c}}\right).\end{equation}


\subsubsection{Special case} 
We now look at a special case to see how the numbers behave. 

 
\paragraph{Inner product function for $n=3$.} 
When $n=3$, we can calculate $c$ from~\eqref{IPexact} to get $c \approx 1.0039$.  
This implies that 
\begin{equation} 
\BIP \gtrapprox 0.251 > 0.250
\quad \text{ or } \quad 
\AIP \gtrapprox 0.1434 > 0.1429.
\end{equation}


These are the only lower bounds on the cheating probabilities for inner product of which we are aware. 

\subsection{Millionaire's problem} \label{MP}

The millionaire's problem is a special case of SFE where $X = W$ and $Y = W \setminus \{n\}$ where $W = \{ 1, \ldots, n \}$ with $n$ being possibly very large, and 
\begin{equation}
f(x,y) = \left\{ \begin{array}{rl} 
1 & \text{ if } y \geq x \\ 
0 & \text{ otherwise}. 
\end{array} \right. 
\end{equation}
We restrict Bob's input since $f(x,n) = 1$ for all $x$, and so is meaningless. Also, it makes sense that honest Bob would not play if he has the maximum amount of money anyway. On the other hand, dishonest Bob would not input $n$, since that will not tell him anything valuable about Alice's input.

The idea is that Alice and Bob want to see who is richer without revealing how much money they have. 
Ideally, Bob should not learn anything more about Alice's wealth other than that it is greater or less than some value. See Figure~\ref{fig_mp} for a pictorial representation of this application. 

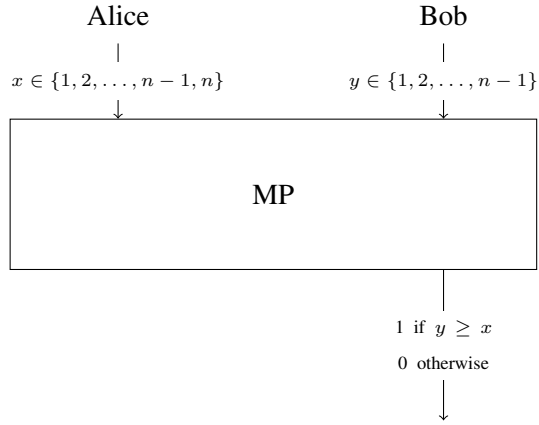
\begin{figure}[h]
\centering
\begin{tikzpicture}[node distance=1.11cm]
\node[draw,minimum width=7cm, minimum height=2cm] (MP) {MP};
\node (Alice) [above=of MP.154] {Alice};
\node (Bob) [above=of MP.24] {Bob};
\draw[->] ([yshift=1cm]MP.154) --  node[fill=white] {\ssmall $x \in \{1, 2, \ldots, n-1, n\}$} +(0pt,-1cm);
\draw[->] ([yshift=1cm]MP.24) -- node[fill=white] {\ssmall $y \in \{1, 2, \ldots, n-1\}$} +(0pt,-1cm);
\draw[->] (MP.336) -- node[fill=white, text width =2cm, align=center] {\ssmall 1 if $y\geq x$ 0 otherwise} +(0pt,-2cm);
\end{tikzpicture}
\caption{A pictorial representation of the millionaire's problem. Alice has a value $x \in \{1, 2, \ldots, n-1, n\}$ and Bob has a value $y \in \{1, 2, \ldots, n-1\}$. At the end of the protocol, Bob has learned if $y \geq x$.} 
\label{fig_mp}
\end{figure}

For each such protocol we define the following symbols.
\begin{center}
\begin{tabularx}{\textwidth}{rX}
  $\BMP$: & The maximum probability with which cheating Bob can guess honest Alice's input $x$. \\
  $\AMP$: & The maximum probability with which cheating Alice can guess honest Bob's input $y$. \\ 
\end{tabularx}
\end{center}
 
We can apply the lower bound from Inequality~\eqref{lb} noting that $\ASFE = \AMP$ and $\BSFEP = \BMP$. 
We can now apply the lower bound, and we get the following inequality    
\begin{equation} \label{lbMP}
\BMP \geq \frac{1}{\left(n-1\right) \AMP}-2\left(n-2\right)\sqrt{1-\frac{1}{\left(n-1\right)\AMP}}. 
\end{equation}  

We now argue that for this task that $\BSFER = \frac{2}{n}$. 
This is because for any input Bob may choose, if he sees that $f(x,y) = 0$ (which happens with probability $\frac{n-y}{n}$) then he has to randomly guess a number between $\{ y+1 , \ldots, n \}$ for his guess for $x$. 
But if he sees that $f(x,y) = 1$ (which happens with probability $\frac{y}{n}$) then he has to randomly guess a number between $\{ 1 , \ldots, y \}$ for his guess for $x$.  
Thus, 
\begin{equation} 
\BSFER = \frac{n-y}{n} \cdot \frac{1}{n-y} + \frac{y}{n} \cdot \frac{1}{y} = \frac{2}{n}.
\end{equation} 
Note that this makes sense as Bob learns $1$ bit of information about $x$ via the function $f$.  

Now we revisit the constants $c_A$ and $c_B$. We make the assumptions following Inequality~\eqref{ASFEc} and Equation~\eqref{BSFEc} that
\begin{equation}
\BMP = c_B \cdot \BSFER=\frac{2c_B}{n} \quad \text{and} \quad \AMP \leq \frac{c_A}{\left|Y\right|}=\frac{c_A}{n-1},\end{equation}
which, using Inequality~\eqref{lbMP}, provides us with a lower bound reminiscent of Inequality~\eqref{lb3}, i.e.,
\begin{equation}c_B \geq \frac{n}{2} \left(\frac{1}{c_A}-2\left(n-2\right)\sqrt{1-\frac{1}{c_A}}\right).
\end{equation}  
Figure \ref{MPc} illustrates the tradeoff between values of $c_B$ and $c_A$ for varying values of $n$. 

Using the rationale behind Equation~\eqref{exact}, we can introduce the constant $c$ such that
\begin{equation}\label{MPexact}c = \frac{n}{2} \left(\frac{1}{c}-2\left(n-2\right)\sqrt{1-\frac{1}{c}}\right).\end{equation}

\begin{figure}[h]
\centering
\scalebox{0.75}{%
\begin{tikzpicture}
\begin{axis}[ymin=1, ymax=2, xmin=1, xmax=1.026, axis lines=left, xlabel = {$c_A$}, ylabel={$c_B$}, width=16cm, height=10cm, x tick label style={/pgf/number format/.cd, precision=4, /tikz/.cd}]
\addplot+ [mark=none, thick, domain=1:1.059, samples=100] {(3*(1/(x)-2*(3-2)*sqrt(1-1/(x))))/2};
\addlegendentry{$n=3$}
\addplot+ [mark=none, thick, domain=1:1.027, samples=100] {(4*(1/(x)-2*(4-2)*sqrt(1-1/(x))))/2};
\addlegendentry{$n=4$}
\addplot+ [mark=none, thick, domain=1:1.0154, samples=100] {(5*(1/(x)-2*(5-2)*sqrt(1-1/(x))))/2};
\addlegendentry{$n=5$}
\addplot+ [mark=none, thick, domain=1:1.0099, samples=100] {(6*(1/(x)-2*(6-2)*sqrt(1-1/(x))))/2};
\addlegendentry{$n=6$} 
\end{axis}
\end{tikzpicture}%
}
\caption{$c_A$ vs. $c_B$ for the millionaire's problem and varying $n$.} 
\label{MPc}
\end{figure}
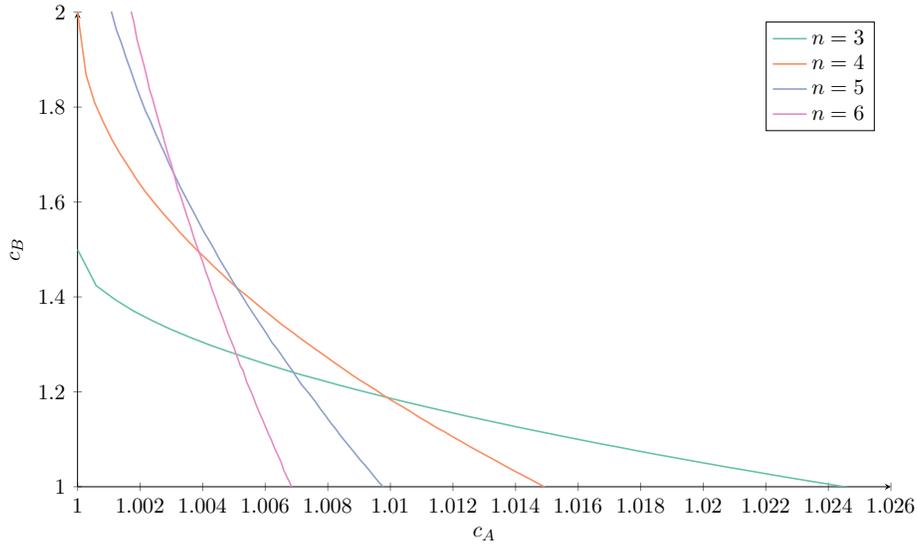


\subsubsection{Special cases} 
We now look at two special cases to see how the numbers behave. 

 
\paragraph{Millionaire's problem for $n=10^9$.} 
When $n=10^9$, we can calculate $c$ from~\eqref{MPexact} to get $c \approx 1+2.5 \times 10^{-19}$.  
This implies that 
\begin{equation} 
\begin{split}
\BMP &\gtrapprox 2 \times 10^{-9} + 5 \times 10^{-28}  > 2 \times 10^{-9}
\quad \text{ or } \quad \\
\AMP &\gtrapprox 1 \times 10^{-9} + 1 \times 10^{-18} + 1.25 \times 10^{-27} > 1 \times 10^{-9} + 1 \times 10^{-18} + 1 \times 10^{-27}.
\end{split}
\end{equation} 

Here we have capped the wealth of Alice and Bob at a billion dollars, but our bound works for any cap. 
 
 
\paragraph{Millionaire's problem, academics version.}  
When $n=10^1$, we can calculate $c$ from~\eqref{MPexact} to get $c \approx 1.0025$.  
This implies that 
\begin{equation} 
\BGS \gtrapprox 0.2005 > 0.2000
\quad \text{ or } \quad 
\AGS \gtrapprox 0.1114 > 0.1111.
\end{equation} 
Note that our lower bound seems to behave better for smaller values of $n$. 
Again, these are the only lower bounds for the cheating probabilities for the millionaire's problem of which we are aware. 
  
          
\section*{Acknowledgements} 

S.O. is supported by the Department of Defense Cyber Scholarship Program (DoD CySP).


\begin{thebibliography}{20}

\bibitem{W83} 
S. Wiesner, ``Conjugate coding,'' \emph{SIGACT News}, 15(1), pp. 78-88, 1983.

\bibitem{DFSS05}
I. Damgaard, S. Fehr, L. Salvail, and C. Schaffner, ``Cryptography In the Bounded Quantum-Storage Model,'' in \emph{Proceedings of the 46th IEEE Symposium on Foundations of Computer Science}, pp.449-458, 2005.

\bibitem{C07}
C. Schaffner, ``Cryptography in the Bounded-Quantum-Storage Model,'' available as arXiv.org e-Print quant-ph/0709.0289, 2007.

\bibitem{WST08}
S. Wehner, C. Schaffner, and B. Terhal, ``Cryptography from Noisy Storage,'' \emph{Physical Review Letters}, 100(22):220502, 2008.

\bibitem{STW09}
C. Schaffner, B. Terhal, S. Wehner, ``Robust cryptography in the noisy-quantum-storage model,'' \emph{Quantum Information \& Computation}, 9:963-996,  2009.

\bibitem{L97} 
H.-K. Lo, ``Insecurity of quantum secure computations,'' \emph{Physical Review A}, 56(2):1154, 1997.

\bibitem{BCS12}
H. Buhrman, M. Christandl, and C. Schaffner,  ``Complete Insecurity of Quantum Protocols for Classical Two-Party Computation,'' \emph{Physical Review Letters},  109(16):160501, 2012. 

\bibitem{CKS13} 
A. Chailloux, I. Kerenidis, and J. Sikora, ``Lower Bounds for Quantum Oblivious Transfer,'' \emph{Quantum Information \& Computation}, 13(1-2), pp. 158-177, 2013.

\bibitem{CGS16}
A. Chailloux, G. Gutoski, and J. Sikora, ``Optimal bounds for semi-honest quantum oblivious transfer,'' \emph{Chicago Journal of Theoretical Computer Science}, 2016(13), 2016.

\bibitem{GRS18}
G. Gutoski, A. Rosmanis, and J. Sikora, ``Fidelity of quantum strategies with applications to cryptography,'' \emph{Quantum}, 2:89, 2018.

\bibitem{KST20}
S. Kundu, J. Sikora, and E. Tan, ``A Device-Independent Protocol for XOR Oblivious Transfer,'' in \emph{Proceedings of the 15th Conference on the Theory of Quantum Computation, Communication, and Cryptography}, 158(12):1-15, 2020.

\bibitem{BB84} 
C.H. Bennett and G. Brassard, ``Quantum cryptography: Public key distribution and coin tossing,'' in \emph{Proceedings of IEEE International Conference on Computers, Systems, and Signal Processing}, (India), pp. 175–179, 1984.

\bibitem{M97} 
D. Mayers, ``Unconditionally Secure Quantum Bit Commitment is Impossible,'' \emph{Physical Review Letters}, 78(17):3414-3417, 1997.

\bibitem{LC97}
H.-K. Lo and H.F. Chau, ``Is quantum bit commitment really possible?'' \emph{Physical Review Letters} 78(17):3410-3413, 1997.

\bibitem{LC98} 
H.-K. Lo and H.F. Chau, ``Why quantum bit commitment and ideal quantum coin tossing are impossible,'' \emph{Physica D: Nonlinear Phenomena}, 120(1-2), pp. 177-187, 1998.

\bibitem{CK11}
A. Chailloux and I. Kerenidis, ``Optimal Bounds for Quantum Bit Commitment,'' in \emph{Proceedings of Annual IEEE Symposium on Foundations of Computer Science}, pp. 354-362, 2011.

\bibitem{K02}
A. Kitaev, ``Quantum coin-flipping,'' unpublished result, talk at the 6th Annual workshop on Quantum Information Processing (QIP 2003), 2002.

\bibitem{AS10}
N. Aharon and J. Silman, ``Quantum dice rolling: a multi-outcome generalization of quantum coin flipping,'' \emph{New Journal of Physics}, 12(3):033027, 2010.

\bibitem{S17}
J. Sikora, ``Simple, Near-Optimal Quantum Protocols for Die-Rolling,'' \emph{Cryptography}, 1:11, 2017.

\bibitem{M07}
C. Mochon, ``Quantum weak coin flipping with arbitrarily small bias,'' available as arXiv.org e-Print quant-ph/0711.4114, 2007.

\bibitem{ACGKM16}
D. Aharonov, A. Chailloux, M. Ganz, I. Kerenidis, and L. Magnin, ``A Simpler Proof of the Existence of Quantum Weak Coin Flipping with Arbitrarily Small Bias,'' \emph{SIAM Journal on Computing}, 45(3):633-679, 2016.

\bibitem{W19}
M. Wilde, ``Quantum Information Theory (second edition),'' Cambridge University Press, 2017.

\bibitem{W99}
A. Winter, ``Coding theorem and strong converse for quantum channels,'' \emph{IEEE Transactions on Information Theory}, 45(7):2481-2485, 1999.

\bibitem{W13}
M. Wilde, ``Sequential decoding of a general classical-quantum channel,'' in \emph{Proceedings of the Royal Society A: Mathematical, Physical and Engineering Sciences}, 469(2157):20130259, 2013.

\bibitem{ANSV02}
A. Ambainis, A. Nayak, A. Ta-Shma, and U. V. Vazirani, ``Quantum dense coding and quantum finite automata,'' \emph{Journal of the ACM}, 49:496-511, 2002.

\bibitem{A06}
S. Aaronson, ``QMA/qpoly is contained in PSPACE/poly: de-Merlinizing quantum protocols,'' in \emph{Proceedings of the Conference on Computational Complexity}, pp. 261-273, 2006.

\bibitem{A01}
A. Ambainis, ``A new protocol and lower bounds for quantum coin flipping,'' in \emph{Proceedings of 33rd Annual ACM Symposium on the Theory of Computing}, pp. 134-142, ACM, 2001.

\bibitem{B81}
M. Blum, ``Coin flipping by telephone,'' in Allen Gersho, editor, \emph{Advances in Cryptology: A Report on CRYPTO 81, CRYPTO 81, IEEE Workshop on Communications Security, Santa Barbara, California, USA, August 24-26, 1981}, pp.11-15, U.C. Santa Barbara, Dept. of Elec. and Computer Eng., ECE Report No. 82-04, 1982, 1981.

\bibitem{B11}
B. Baumgartner, ``An inequality for the trace of matrix products, using absolute values,'' available as arXiv.org e-Print math-ph/1106.6189, 2011.

\bibitem{CK09}
A. Chailloux and I. Kerenidis, ``Optimal quantum strong coin flipping,'' in \emph{Proceedings of 50th IEEE Symposium on Foundations of Computer Science}, pp. 527-533, IEEE Computer Society, 2009.

\bibitem{SR01}
R.W. Spekkens and T. Rudolph, ``Degrees of concealment and bindingness in quantum bit commitment protocols,'' \emph{Physical Review A}, 65:012310, 2001. 

\end{thebibliography}
\end{document}